\font\msbm=msbm10
\numberwithin{equation}{section}
\theoremstyle{plain}
\newtheorem{theorem}{Theorem}[section]
\newtheorem{lemma}[theorem]{Lemma}
\newtheorem{definition}{Definition}[section]
\newtheorem{remark}[theorem]{Remark}
\def\mathbb#1{\hbox{\msbm{#1}}}
\newcommand{\bq}{\boldsymbol{q}}
\newcommand{\bz}{\boldsymbol{z}}
\newcommand{\bone}{\boldsymbol{1}}
\newcommand{\BA}{\boldsymbol{A}}
\newcommand{\BJ}{\boldsymbol{J}}
\newcommand{\BQ}{\boldsymbol{Q}}
\newcommand{\BX}{\boldsymbol{X}}
\newcommand{\BZ}{\boldsymbol{Z}}
\newcommand{\btheta}{\boldsymbol{\theta}}
\newcommand{\bomega}{ \boldsymbol{\omega}}
\newcommand{\I}{\boldsymbol{I}}
\newcommand{\RR}{\mathbb{R}}
\newcommand{\lag}{\langle}
\newcommand{\rag}{\rangle}
\newcommand{\Tr}{\text{Tr}}
\newcommand*\diff{\mathop{}\!\mathrm{d}}
\DeclareMathOperator{\mi}{\mathrm{i}}
\DeclareMathOperator{\argmax}{\text{argmax}}
\DeclareMathOperator{\argmin}{\text{argmin}}
\DeclareMathOperator{\diag}{diag}
\DeclareMathOperator{\ddiag}{ddiag}
\definecolor{xl}{RGB}{200,50,120}
\begin{document}
\title{\bf On the Critical Coupling of the Finite \\
Kuramoto Model on Dense Networks}

\author{Shuyang Ling \thanks{Division of Data Science, New York University Shanghai, Shanghai, China, (Email: sl3635@nyu.edu)} }

\maketitle

\begin{abstract}

Kuramoto model is one of the most prominent models for the synchronization of coupled oscillators. It has long been a research hotspot to understand how natural frequencies, the interaction between oscillators, and network topology determine the onset of synchronization. In this paper, we investigate the critical coupling of Kuramoto oscillators on deterministic dense networks, viewed as a natural generalization from all-to-all networks of identical oscillators. We provide a sufficient condition under which the Kuramoto model with non-identical oscillators has one unique and stable equilibrium. Moreover, this equilibrium is phase cohesive and enjoys local exponential synchronization. We perform numerical simulations of the Kuramoto model on random networks and circulant networks to complement our theoretical analysis and provide insights for future research.

\end{abstract}

\section{Introduction}\label{s:intro}
The synchronization problem of coupled oscillators has a long history, dating back to Christiaan Huygens in 1665 who investigated the behavior of two pendulum clocks mounted side by side on the same support~\cite{ADKM08, RONA16}. Since then, the study of the synchronization phenomenon has attracted a large amount of attention across various scientific areas including mathematics, physics, neuroscience, and engineering. 
Kuramoto model is one classical model of the synchronization of coupled oscillators~\cite{Kura75, Kura84}. Kuramoto considered $n$ fully connected oscillators on a torus whose dynamics is characterized by an ordinary differential equation:

\begin{equation}\label{model:kura1}
\dot{\theta}_i(t) =  \omega_i - \frac{K}{n}\sum_{j=1}^n \sin (\theta_i - \theta_j), \quad 1\leq i\leq n
\end{equation}
where $K$ is called the coupling strength between these oscillators and $\omega_i$ is the natural frequency of the $i$th oscillator. 

One central question about the Kuramoto model is to understand when the oscillators $\{\theta_i(t)\}_{i=1}^n$ will synchronize, the answer of which depends on $K$ and the strength of natural frequencies $\{\omega_i\}_{i=1}^n$. This question has sparked extensive research. 
Moreover, the Kuramoto model has already found numerous applications such as electric power networks, neuroscience, chemical oscillations, spin glasses, see~\cite{ABVRS05, ADKM08, Bullo19, DB12, DB14, DCB13, RPJK16} and the references therein for more details.

In this paper, we consider the Kuramoto model on more general networks, 
\begin{equation}\label{model:kura2}
\dot{\theta}_i(t) =  \omega_i - \frac{K}{n}\sum_{j=1}^na_{ij} \sin (\theta_i - \theta_j), \quad 1\leq i\leq n
\end{equation}
where $\BA = (a_{ij})_{1\leq i,j\leq n}$ is the adjacency matrix of the underlying network connecting these oscillators, i.e.,  two oscillators $i$ and $j$ are linked if and only if $a_{ij}=1$. 
\begin{figure}[h!]
\centering
\includegraphics[width=75mm]{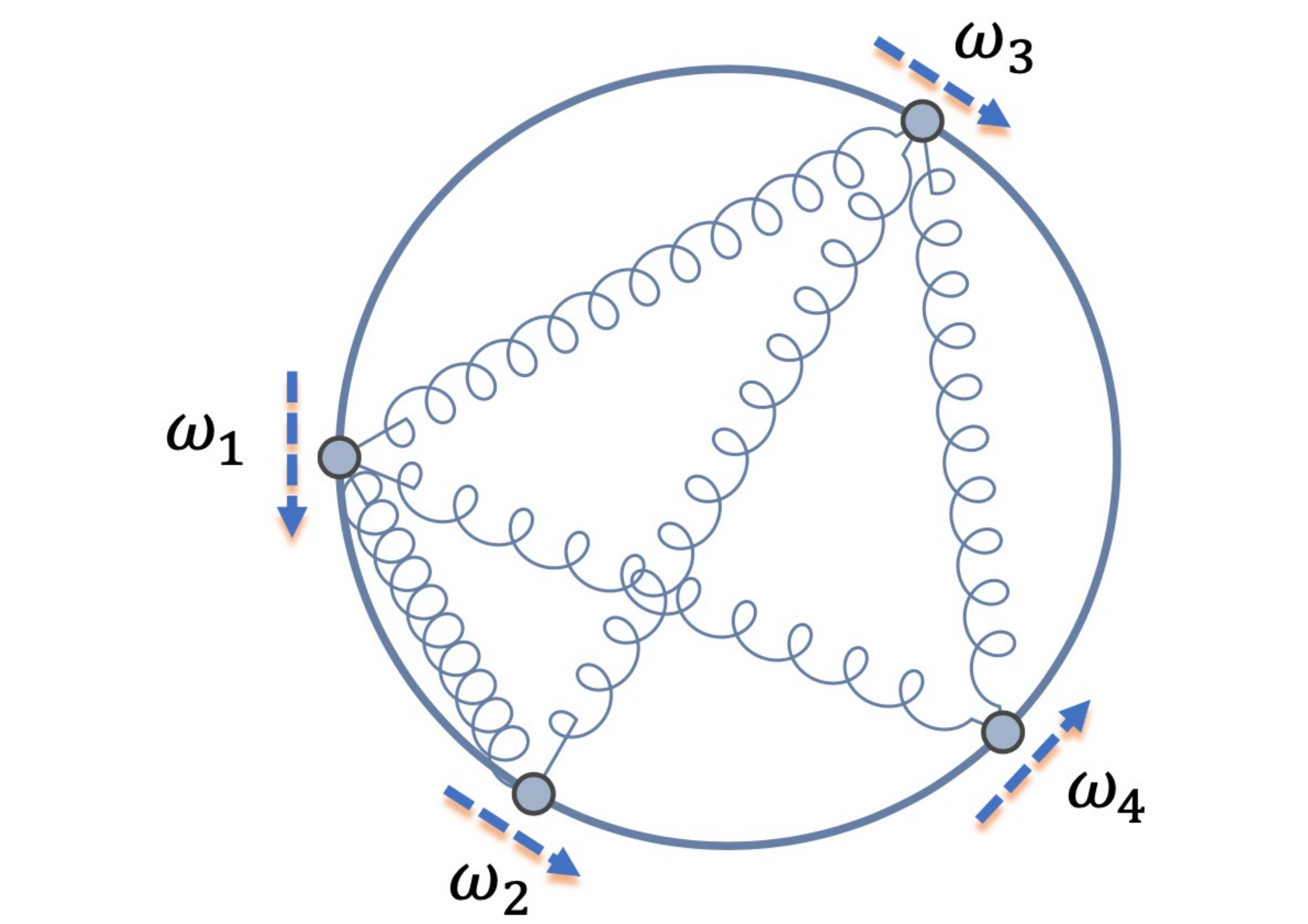}
\caption{The dynamics of Kuramoto model with non-identical oscillators}
\end{figure}
We are interested in the synchronization phenomenon and the long-term behavior of this dynamical system. In particular, we will focus on answering this question:
\begin{equation}\label{question}
\textit{Does there exist a frequency synchronized solution? If so, is it unique?}
\end{equation}
Unlike~\eqref{model:kura1} where the network is complete, the answer to~\eqref{question} will depend on network topology $\BA$ besides the coupling strength $K$ and the dissimilarities of natural frequencies.

\subsection{The state-of-the-art works and our contribution}

As discussed briefly, our work focuses on answering the question~\eqref{question}: the existence and uniqueness of frequency synchronized solution of~\emph{finite} Kuramoto model on~\emph{deterministic dense networks} where each vertex has~\emph{sufficiently many neighbors}. As the synchronization of coupled oscillators is such an important research topic, there inevitably exists an extensive body of literature on it. As a result, we are unable to give an exhaustive literature review here. Instead, we will review the related literature and point out how these previous works inspire ours. 

The study of collective synchronization of coupled oscillators has found quite many applications in biology, physics, and engineering~\cite{Bullo19, MS90, S04, Strogatz00}. Starting from the work by Winfree~\cite{W67}, researchers are interested in developing mathematical models and approaches to analyze and explain the synchronization phenomena. One of the most successful attempts was made by Kuramoto in~\cite{Kura75, Kura84}, who proposed a simplified yet still highly non-trivial model of coupled oscillators~\eqref{model:kura1}. Kuramoto showed that if the coupling $K$ is smaller than a certain threshold $K_c$, then the oscillators are incoherent and transiting to synchrony as $K$  exceeds the threshold $K_c$. This critical coupling threshold is determined by the strength of the natural frequency $\omega_i$ of each oscillator. Since then, a lot of progress has been made to advance our understanding of the Kuramoto model, see~\cite{ABVRS05, ADKM08, Strogatz00} for excellent reviews on this topic. Most of the analyses rely on tools from statistical physics such as calculating the thermodynamical limit of the Kuramoto model. 

As pointed out in~\cite{Strogatz00}, the Kuramoto model with finite oscillators exhibits very different behavior from the scenario where the number of oscillators goes to infinity. It is one major research problem to understand what the critical coupling is for the finite Kuramoto model. In~\cite{DB11}, D\"{o}rfler and Bullo gave a comprehensive and excellent study on the critical coupling of the finite Kuramoto model on a complete graph. In particular,~\cite{DB11} provided a sufficient and necessary condition on the coupling strength for the oscillators to achieve synchronization within the cohesive region where the cohesiveness is later formally defined in~\eqref{def:cohesive}. 
Finite Kuramoto model on general complex networks are discussed in~\cite{ABVRS05, CS09,DB12b, DB14, DCB13,JMB04, RPJK16, VM09}. A review of recent progresses can be found in~\cite{ABVRS05,DB12b,DB14,RPJK16,Strogatz01}.
The work~\cite{JMB04} by Jadbabaie, Motee, and Barahona first estimated the critical coupling of the Kuramoto model on arbitrary complex networks for local convergence and showed the existence and uniqueness of stable fixed point within the cohesive region. Later on, various necessary and sufficient conditions are obtained~\cite{CS09, DB12b, DB14, VM08} for the estimation of the critical coupling.
In particular,~\cite{DB12b, DB14} established sufficient conditions which guarantee frequency synchronization and the existence of locally exponential stable fixed point within the phase-cohesive region. The work~\cite{DCB13} proposed a concise and closed-form condition for synchronization for a large family of networks. These sufficient conditions are formulated by using the second smallest eigenvalue of graph Laplacian associated to the underlying network and the natural frequency $\{\omega_i\}_{i=1}^n.$ The critical coupling of the Kuramoto model is also studied on complete bipartite graphs~\cite{VM09} and random graphs~\cite{CMM18, Ichi04, LLYMG16}.

\vskip0.25cm

It is important to note that Kuramoto model is closely related to the gradient flow of
\begin{equation}\label{def:E}
E(\btheta) := \frac{K}{2n}\sum_{1\leq i,j\leq n} a_{ij}(1 - \cos(\theta_i - \theta_j)) - \sum_{i=1}^n \omega_i \theta_i
\end{equation}
and thus~\eqref{model:kura2} has the following equivalent form:
\[
\frac{\diff \btheta}{\diff t} = -\nabla_{\btheta}E(\btheta),\quad \btheta(0) = \btheta_0
\]
where $\btheta_0$ is the initial state. 
In other words, the Kuramoto oscillators move in the direction which makes $E(\btheta)$ decrease fastest.
Recently,~\cite{LXB19, LS19, Taylor12, TSS19} have studied the global phase synchronization for the homogeneous case $\omega_i = 0$, which is actually a special case of general Kuramoto model. Under $\omega_i = 0$, $E(\btheta)$ in~\eqref{def:E} is a nonnegative energy function and also has been found related to nonconvex optimization approach applied to group synchronization problem on complex networks and community detection~\cite{BBV16,LXB19}. Here the main focus is to understand how the energy landscape of $E(\btheta)$, i.e., the number of stable fixed points of homogeneous Kuramoto dynamics, depends on the network topology. Two main families of networks are investigated: (i): deterministic dense network and circulant networks; (ii): Erd\H{o}s-R\'enyi random graphs~\cite{Hof16}.
The work~\cite{Taylor12} by Taylor first proved that homogenous Kuramoto enjoys global synchronization from almost any initialization if the degree of each node is at least $0.9375n$. Later,~\cite{LXB19} has proven that 0.7929$n$ suffices to guarantee the uniqueness of a globally stable synchronized state, which was recently improved further to 0.7889$n$ in~\cite{LS19} with a more refined analysis. On the other hand,~\cite{CM15, WSG06} have constructed very interesting examples that the uniform twisted states are possibly stable fixed points for homogeneous Kuramoto model on circulant networks. More precisely, there always exists a circulant network whose degree is smaller than $15n/22\approx 0.6818n$ such that the corresponding homogeneous Kuramoto oscillators have a stable equilibrium which is not phase-synchronized. A recent conjecture is proposed in~\cite{TSS19}, stating that the critical threshold could be $0.75n$. The discussion on the global synchronization of homogeneous Kuramoto model has been generalized to other manifolds such as $n$-sphere and Stiefel manifold~\cite{MTG17, MTG20}.
\vskip0.25cm

Our work is motivated by a series of works on the critical coupling of the finite Kuramoto model~\cite{DB12b, DB14, JMB04}, and the recent progress on the uniqueness of stable phase synchronized solution of the homogeneous Kuramoto model on dense networks~\cite{LXB19, LS19, Taylor12, TSS19}. 
Our main contribution is on establishing a sufficient condition under which the inhomogeneous Kuramoto oscillators on dense networks have a unique and stable equilibrium which also is phase-cohesive and satisfies local exponential synchronization. This work is a generalization of the work~\cite{DB11} by D\"{o}rfler and Bullo from all-to-all networks to dense networks. Our work also extends the previous analysis of the energy landscape of the Kuramoto model with identical oscillators~\cite{LXB19, LS19, Taylor12} to the case with heterogeneous oscillators. Moreover, our numerical experiments provide insights for several future directions which are well worth exploring.

\subsection{Organization}
We organize this paper as follows: we will discuss the basics of synchronization and present our main theorem in Section~\ref{s:theorem}. Section~\ref{s:numerics} focuses on numerical simulations and the proof will be given in Section~\ref{s:proofs}. 

\subsection{Notation}
We introduce notations which will be used throughout the paper. Vectors are denoted by boldface lower case letters, e.g.,~$\bz.$
For any vector $\bz$, $\|\bz\|: = \sqrt{\sum_{i=1}^n z_i^2}$ denotes its $\ell_2$-norm; $\|\bz\|_{\infty}:=\max_{1\leq i\leq n}|z_i|$ stands for the $\ell_{\infty}$-norm. For a given matrix $\BZ$, $\ddiag(\BZ)$ denotes a diagonal matrix whose diagonal entries are the same as those of $\BZ$; $\BZ^{\top}$ is the transpose of $\BZ$.
Let $\I_n$ and $\bone_n$ be the $n\times n$ identity matrix and a column vector of ``$1$" in $\RR^n$ respectively. For any symmetric matrix $\BZ\in\RR^{n\times n}$, we denote $\BZ\succeq 0$ if $\BZ$ is positive semidefinite. For any two matrices $\BX$ and $\BZ$ of the same size, their inner product is denoted by $\lag \BX,\BZ\rag := \sum_{i,j}X_{ij}Z_{ij}$ and $\BX\circ \BZ$ denotes their Hadamard product, i.e., $(\BX\circ \BZ)_{ij} = X_{ij}Z_{ij}.$

\section{Preliminaries and main theorem}\label{s:theorem}

We first review the basics of synchronization before moving to our main results.
One can find more discussion on these concepts from many sources such as~\cite[Chapter 16]{Bullo19}.
\begin{definition}[Phase synchronization]
A solution $\btheta(t)\in\RR^n$ achieves phase synchronization if $\theta_i(t) = \theta_j(t)$ for all $t\geq 0.$
\end{definition}
It is well-known that phase synchronization is possible only if $\omega_i = \omega_j$, see~\cite{Bullo19}. Otherwise, $\{\btheta: \theta_i = \theta_j,i\neq j\}$ is not even a fixed point of~\eqref{model:kura1}.
For the Kuramoto model with non-identical oscillators, we focus on frequency synchronization. 
\begin{definition}[Frequency synchronization] 
A solution $\btheta(t)\in\RR^n$ achieves frequency synchronization if $\dot{\theta}_i(t) = \dot{\theta}_j(t)$ for all $t\geq 0.$
\end{definition}

\begin{definition}[Phase cohesiveness]
A solution $\btheta(t)\in\RR^n$ is $\gamma$-phase cohesive if 
there exists a number $\gamma\in[0,\pi)$ such that $\theta(t)\in \Delta(\gamma)$ for all $t\geq 0$ where
\begin{equation}\label{def:cohesive}
\Delta(\gamma) : = \{\btheta: |\theta_i - \theta_j| < \gamma, \quad\forall ~i\neq j\}.
\end{equation}
 In other words, an arc of length $\gamma$ contains all phases $\theta_i(t)$ on the unit circle for any $t\geq 0$.
\end{definition}

Throughout our discussion, we will analyze the oscillators under the rotating frame, i.e., replacing $\omega_i$ by $\omega_i - \frac{1}{n}\sum_{j=1}^n \omega_j$. In this way, we have
\[
\sum_{i=1}^n \dot{\theta}_i(t) = \sum_{i=1}^n \omega_i = 0.
\]
In other words, a frequency-synchronized solution $\btheta(t)$ is a fixed point of~\eqref{model:kura2}, 
\begin{equation}\label{def:fp}
\dot{\theta}_i(t) = 0 \Longleftrightarrow \frac{K}{n}\sum_{j=1}^na_{ij} \sin(\theta_i - \theta_j) = \omega_i, \quad 1\leq i\leq n.
\end{equation}
The stability is determined by the spectra of Jacobian matrix. The Jacobian matrix is equal to $-n^{-1}K\cdot \BJ(\btheta)$ where 
\begin{equation}\label{def:J}
\BJ(\btheta) := \ddiag(\BA\BQ\BQ^{\top}) - \BA\circ \BQ\BQ^{\top}.
\end{equation}
Here $\BA = (a_{ij})_{1\leq i,j\leq n}$ is the adjacency matrix, $\BQ\in\RR^{n\times 2}$ with its $i$th row $\bq_i \in [\cos(\theta_i), \sin(\theta_i)]$, and $(\BA\circ\BQ\BQ^{\top})_{ij} := a_{ij}\cos(\theta_i - \theta_j)$ is the Hadamard product of $\BA$ and $\BQ\BQ^{\top}$. The $(i,j)$-entry of $\BJ(\btheta)$ is
\[
(\BJ(\btheta))_{ij} = 
\begin{cases}
\sum_{j=1}^n a_{ij} \cos(\theta_i - \theta_j),& i = j,\\
-a_{ij}\cos(\theta_i - \theta_j), & i \neq j, 
\end{cases}
\]
which is actually a Laplacian matrix associated with the signed weights $\{a_{ij}\cos(\theta_i-\theta_j)\}_{i,j}$.
From the construction, we know that $\btheta(t)$ is a stable frequency synchronized solution if $\|\dot{\btheta}\| = 0$ and the second smallest eigenvalue $\lambda_2(\BJ(\btheta)) > 0$ since 0 is always an eigenvalue of $\BJ(\btheta).$

As discussed before, a complete understanding of synchronization of the Kuramoto model on general complex networks is still one major open problem in the field. Our discussion will mainly focus on deterministic dense networks, which are defined formally as follows.
\begin{definition}[Deterministic dense networks]
A network with symmetric adjacency matrix $\BA = (a_{ij})_{1\leq i,j\leq n} \in \{0,1\}^{n\times n}$ is $\mu$-dense if the degree of each node is at least $\mu(n-1)$, $0\leq \mu\leq 1$. In particular, if $\mu = 1$, the corresponding network is complete.
\end{definition}
The Kuramoto model on  dense networks is a natural generalization of the classical Kuramoto model with all-to-all coupling. Now we return to the main questions~\eqref{question}:
\begin{enumerate}[(a)]
\item When do the coupled oscillators on $\mu$-dense network have a stable frequency synchronized solution? 
\item If such a solution exists, is it unique? 
\end{enumerate}
The answer depends on the parameter $\mu$, critical coupling parameter $K$, as well as the strength of natural frequency $\omega_i.$ This will be the main focus of our paper. 




\begin{theorem}\label{thm:main}

Consider the Kuramoto model~\eqref{model:kura2} with natural frequency $\{\omega_i\}_{i=1}^n$ and $\sum_{i=1}^n \omega_i = 0$. Suppose
\begin{equation}\label{eq:main}
\|\omega\|_{\infty} : = \max_{1\leq i\leq n} |\omega_i| < K\left( \sqrt{\mu - \frac{3}{4}} + \mu - 1\right),
\end{equation}
then there exists a unique frequency-synchronized solution.
Moreover, this solution is located within the phase cohesive region $\Delta(\pi/2)$ and the coupled oscillators achieve local exponential frequency synchronization.

\end{theorem}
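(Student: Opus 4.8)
The plan is to establish existence, uniqueness, phase cohesiveness, and local exponential stability in that order, leveraging the fixed-point equation~\eqref{def:fp} and the Jacobian structure~\eqref{def:J}. For \emph{existence} I would work on the torus $\mathbb{T}^n$ (or on the phase-cohesive set $\overline{\Delta(\gamma)}$ for a suitable $\gamma \le \pi/2$) and use a topological degree / fixed-point argument: rewrite the equilibrium condition as $F(\btheta) = \bzero$ where $F_i(\btheta) = \frac{K}{n}\sum_j a_{ij}\sin(\theta_i-\theta_j) - \omega_i$, and show that on the boundary of $\Delta(\gamma)$ the vector field points inward, so that $\overline{\Delta(\gamma)}$ is positively invariant and carries an equilibrium. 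Concretely, at a configuration in $\overline{\Delta(\gamma)}$ where the extreme oscillators $\theta_{\max}, \theta_{\min}$ are $\gamma$ apart, one bounds $\dot\theta_{\max} - \dot\theta_{\min}$ from above using the degree lower bound $\mu(n-1)$: the ``missing'' non-edges at each vertex can contribute at most a fixed deficit, and this is where the combination $\sqrt{\mu - 3/4} + \mu - 1$ should emerge. The key estimate is that for a $\mu$-dense graph and any $\btheta \in \overline{\Delta(\pi/2)}$, $\frac{1}{n}\sum_j a_{ij}\sin(\theta_i-\theta_j)$ evaluated at the extremal vertex is at least $(\sqrt{\mu-3/4} + \mu - 1)\sin\gamma$ or a comparable quantity; contrasting this with $\|\omega\|_\infty$ closes the invariance argument and yields an equilibrium in $\Delta(\gamma_*)\subseteq\Delta(\pi/2)$ for some $\gamma_* < \pi/2$.

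For \emph{uniqueness} and \emph{local exponential stability} I would use the fact, stated in the excerpt, that the Jacobian at an equilibrium is $-\frac{K}{n}\BJ(\btheta)$ where $\BJ(\btheta)$ is the Laplacian with signed weights $a_{ij}\cos(\theta_i-\theta_j)$. Inside $\Delta(\pi/2)$ all these weights are strictly positive, so $\BJ(\btheta) \succeq 0$ with $\lambda_2(\BJ(\btheta)) > 0$ by connectivity (a $\mu$-dense graph with $\mu > 3/4$ is certainly connected); hence every equilibrium in $\Delta(\pi/2)$ is hyperbolic and locally exponentially stable, giving local exponential frequency synchronization. Uniqueness within $\Delta(\pi/2)$ then follows from a convexity/monotonicity argument: the energy $E(\btheta)$ restricted to $\Delta(\pi/2)$ (modulo the rotation symmetry $\btheta \mapsto \btheta + c\bone$) has positive-definite Hessian $\frac{K}{n}\BJ(\btheta)$ on the complement of $\bone$, so $E$ is geodesically strictly convex there and admits at most one critical point. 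One must also argue that no equilibrium can live on $\partial\Delta(\pi/2)$ or outside it while still being reachable — that is handled by the inward-pointing estimate above, which confines all ``relevant'' equilibria strictly inside.

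The main obstacle I expect is the sharp density bound, i.e., proving that for every $\btheta \in \overline{\Delta(\pi/2)}$ and every vertex the weighted incoming coupling at the extremal oscillator is controlled below by exactly $K(\sqrt{\mu-3/4} + \mu - 1)$. The subtlety is that non-edges are adversarially placed and the $\sin(\theta_i - \theta_j)$ terms can be negative (when $\theta_j$ lies on the ``wrong'' side, within the arc but past $\theta_i$), so one cannot simply drop terms; instead one needs a careful worst-case analysis — likely parametrizing the arc, using the concavity of $\sin$ on $[0,\pi]$ and a rearrangement/averaging argument over which $\mu(n-1)$ neighbors are present — and the peculiar form $\sqrt{\mu - 3/4}$ strongly suggests optimizing a quadratic in the position of the extremal vertex within the arc (the $3/4 = 1 - 1/4$ hinting at a $\cos^2$ or half-angle term). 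Verifying that this optimization is tight, and that it degrades gracefully to the known complete-graph result of D\"orfler–Bullo when $\mu = 1$ (where the bound becomes $\|\omega\|_\infty < K$, matching~\cite{DB11} after the $\Delta(\pi/2)$ relaxation), will be the crux of the argument.
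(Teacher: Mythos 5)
Your plan establishes existence, stability, and uniqueness only \emph{inside} the cohesive region: the inward-pointing boundary estimate plus the convexity of $E$ (whose Hessian is $\frac{K}{n}\BJ(\btheta)\succeq 0$ with $\lambda_2>0$ on $\Delta(\pi/2)$) does give one equilibrium there, and this part parallels the paper's Lemmas~\ref{lem:basin} and~\ref{lem:global}. The genuine gap is that the theorem's uniqueness claim is global: one must rule out \emph{stable} equilibria lying outside $\Delta(\pi/2)$, and your proposal has no mechanism for this. The sentence ``no equilibrium can live on $\partial\Delta(\pi/2)$ or outside it while still being reachable --- that is handled by the inward-pointing estimate'' does not work: forward invariance of $\Delta(\pi/2)$ says nothing about equilibria elsewhere on the torus, and such equilibria genuinely exist and can be stable (e.g.\ twisted states on circulant networks, which is exactly why the paper's numerics in Figure~\ref{fig:WSG_twist} show non-cohesive stable synchronized states at low density). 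The paper's Lemma~\ref{lem:quad} is precisely the missing ingredient: it uses the \emph{second-order} necessary condition $\BJ(\btheta)\succeq 0$ tested against $\BQ\BQ^{\top}$ (inequality~\eqref{eq:2nd}) to force the unnormalized order parameter to satisfy $\|r(\btheta)\|^2 \geq (2\mu-\tfrac32)n^2 + 2(1-\mu)n$, and then the first-order equations~\eqref{def:fp} to bound $|\sin(\theta_i-\psi)|$ by $1/\sqrt{2}$, confining every stable equilibrium to two opposite quadrants; a test-vector argument ($\bz=\bone_{\Gamma}-\bone_{\Gamma^c}$) then excludes the split configuration. No argument of the ``extremal vertex on the boundary'' type can replace this, because it only ever constrains trajectories that are already cohesive.

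Relatedly, you misattribute the origin of the constant: the worst-case coupling estimate at the extremal oscillator on a $\mu$-dense graph yields only the threshold $\|\omega\|_{\infty}/K \leq (2\mu-1)/2$ of Lemma~\ref{lem:basin} (via shared-neighbor counting and $\sin\theta+\sin(\gamma-\theta)\geq\sin\gamma$), not $\sqrt{\mu-\tfrac34}+\mu-1$. The $\tfrac34$ comes from the second-order argument: the rank-two bound $\|\BQ\BQ^{\top}\|_F^2 \geq n^2/2$ minus the $2(1-\mu)n^2$ loss from missing edges gives $2\mu-\tfrac32$, and requiring the sine bound~\eqref{eq:sin} to be at most $1/\sqrt{2}$ produces $\mu-\tfrac34$ under the square root. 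Your sanity check at $\mu=1$ is also off: $\sqrt{1-\tfrac34}+0=\tfrac12$, so~\eqref{eq:main} reads $K>2\|\omega\|_{\infty}$ (as stated in the paper's remark (i)), not $\|\omega\|_{\infty}<K$. Finally, if you keep the degree-theoretic existence argument instead of the paper's Lyapunov decay of $\|\dot{\btheta}\|^2$, you must quotient out the rotation symmetry $\btheta\mapsto\btheta+c\bone$ (the equilibria form a circle), e.g.\ by fixing the mean phase; this is fixable but should be said.
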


\begin{remark}
From~\eqref{eq:main}, we see that $\mu \geq \frac{3-\sqrt{2}}{2}$ is implicitly assumed since the right hand side of~\eqref{eq:main} is nonnegative. Therefore, this network is always connected.
\end{remark}

We briefly discuss what this theorem implies by looking at two interesting special cases: (i) $\mu=1$; (ii) $\|\omega\|_{\infty} = 0$, and make comparisons with the state-of-the-art results. 

\begin{enumerate}[(i)]

\item Heterogeneous oscillators over a complete graph $\mu=1$:

The model~\eqref{model:kura2} immediately reduces to the classical Kuramoto model~\eqref{model:kura1}. Then the condition~\eqref{eq:main} reads
$K > 2\|\omega\|_{\infty}$. Note that~\cite[Theorem 4.1]{DB11} shows that $K \geq \omega_{\max} - \omega_{\min}$ suffices to guarantee a local exponentially stable frequency synchronized solution within the phase cohesive region $\Delta(\gamma)$ for some $\gamma < \pi/2$. Compared with~\cite{DB11}, our result is slightly looser and becomes tight if $\omega_{\max} + \omega_{\min} = 0.$ 

\item Homogeneous Kuramoto model on arbitrary unweighted networks:

If $\omega_i = 0$ holds for all $1\leq i\leq n$, then~\eqref{eq:main} turns into
\[
\sqrt{\mu - \frac{3}{4}} + \mu-1 > 0 ~~\Longleftrightarrow ~~\mu > \frac{3-\sqrt{2}}{2} \approx 0.7929.
\]
In other words, the stable equilibrium is unique which is exactly the phase synchronized solution, i.e., $\theta_i = \theta_j$, if $\mu > 0.7929$. This matches the author's previous work~\cite{LXB19}. This result is later improved to $\mu > 0.7889$ in~\cite{LS19}. On the other hand, for any $\mu< 15/22\approx 0.6818$, there always exists a $\mu$-dense circulant network such that the corresponding Kuramoto model has multiple stable equilibria, i.e., phase/frequency synchronized solutions. In other words, for $\mu < 15/22$, there always exist multiple stable frequency synchronized solutions if $\omega_i$ is sufficiently small. Adding tiny  natural frequencies to the homogeneous Kuramoto model, regarded as a small perturbation, will not change the number of stable solutions. We provide a numerical illustration in Figure~\ref{fig:WSG_twist}.

\end{enumerate}

\section{Numerics}\label{s:numerics}
One natural question is the tightness of the bound~\eqref{eq:main} in Theorem~\ref{thm:main} as well as the possible extension of this result to other families of networks.
We address this issue by providing numerical examples. In particular, we will focus on two types of graphs: Erd\H{o}s-R\'enyi random graphs and circulant networks.

\subsection{Frequency synchronization on Erd\H{o}s-R\'enyi random graphs}
Note that the condition in~\eqref{eq:main} only holds for very dense networks,  viewed as the worst-case scenario. It is natural to ask what the critical coupling is for Erd\H{o}s-R\'enyi (ER) random graphs. We denote ER graph as ${\cal G}(n,p)$ if the network is of size $n$ and its $(i,j)$-entry $a_{ij}$ of the symmetric adjacency matrix is a Bernoulli random variable with parameter $p.$

We want to study how the frequency synchronization depends on $p$ and the strength of natural frequency. 
We let network contain $n=100$ vertices and $K=1$, and sample $\BA$ from ${\cal G}(n,p)$ with $p$ varying from 0.05 to 1. For each random instance, we generate uniformly distributed $\omega_i$ over $[-\kappa p, \kappa p]$ for $\kappa = 0,0.05,\cdots,0.95,1$. In other words, the range of $\omega_i$ is approximately proportional to the expected degree $np$ of each node.
For each network, we use Euler scheme to simulate the trajectory with step size $\tau = 2\cdot10^{-2}.$ The simulation stops if the iterate converges to an approximate stable frequency-synchronized solution, i.e.,
\[
\|\dot{\btheta}\| < 10^{-4}, \quad \lambda_{2}(\BJ(\btheta)) \geq 10^{-8},
\]
or the iteration number reaches $10^4$. We run 20 experiments for each pair of $(p, \kappa)$, and calculate the proportion of the iterate converging to a stable synchronized solution. 

The phase transition plot is in Figure~\ref{fig:ER}: the white region stands for success and black means failure. We can see if $\kappa = \|\omega\|_{\infty}/p < 0.75$ and $p > 0.2$, the dynamics will finally synchronize with high probability. Empirically, it seems $\kappa$ does not heavily depend on $p$. We also compute the length of the shortest arc covering all $\theta_i$.
Figure~\ref{fig:ER} indicates that if $\kappa < 0.7$, i.e., $\|\omega\|_{\infty} \leq 0.7p$, the oscillators will converge to frequency-synchronized solution in the cohesive region $\Delta(\pi/2).$
In fact, Figure~\ref{fig:ER} matches our previous numerical study in~\cite{LXB19}. For the homogeneous Kuramoto model $\kappa = 0$ ($\omega_i = 0$), the oscillators always achieve global synchronization for $p > n^{-1}\log n$, i.e., as long as the ER graph is connected with high probability, there exists a synchronized solution. 

Compared with the bound in~\eqref{eq:main}, it shows that our existing bound is relatively conservative since it does not assume an additional probabilistic structure on the networks. We believe one can obtain a refined bound on the critical coupling threshold by taking this extra prior information into account.

\begin{figure}[h!]
\begin{minipage}{0.48\textwidth}
\centering
\includegraphics[width=82mm]{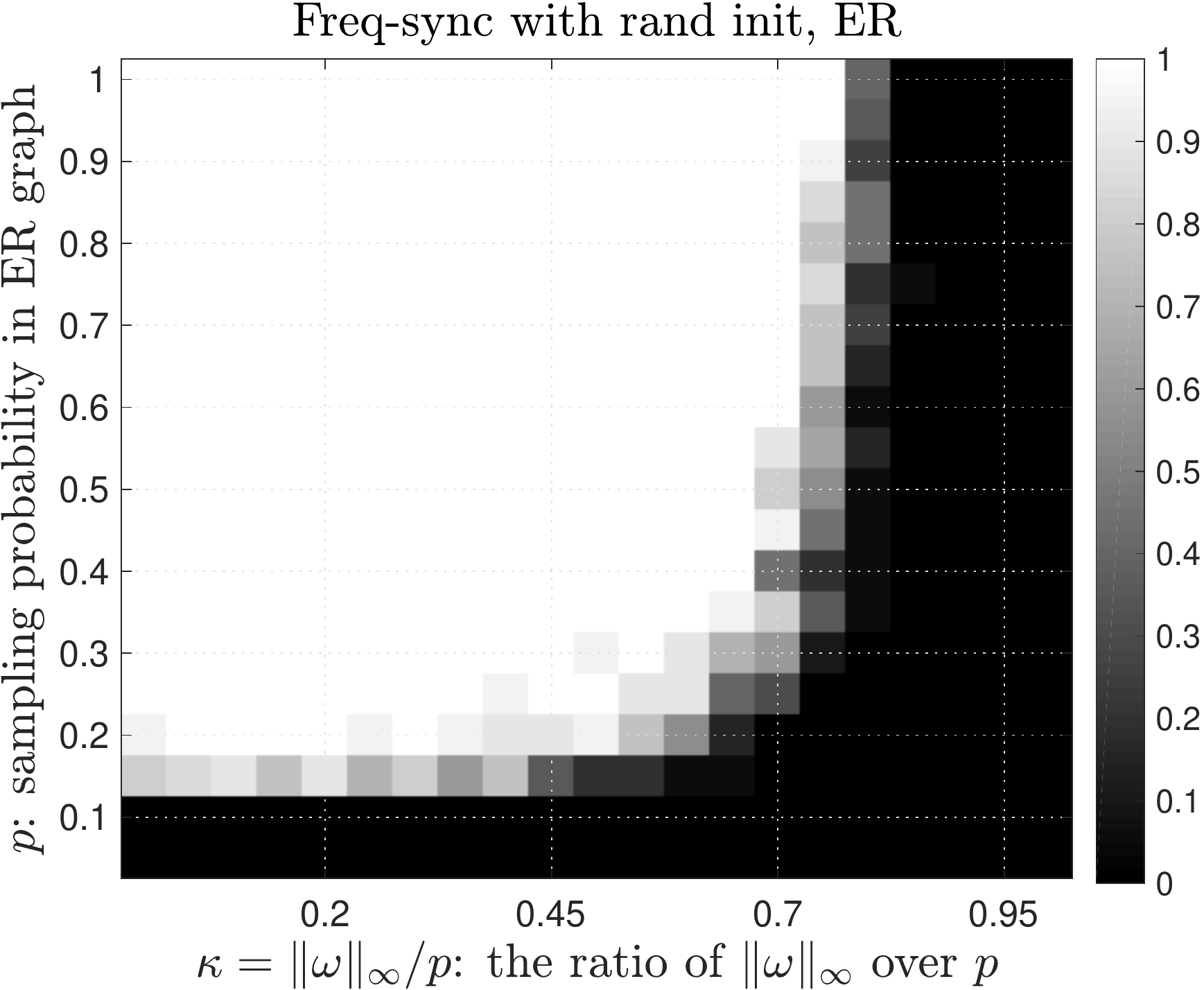}
\end{minipage}
\hfill
\begin{minipage}{0.48\textwidth}
\centering
\includegraphics[width=82mm]{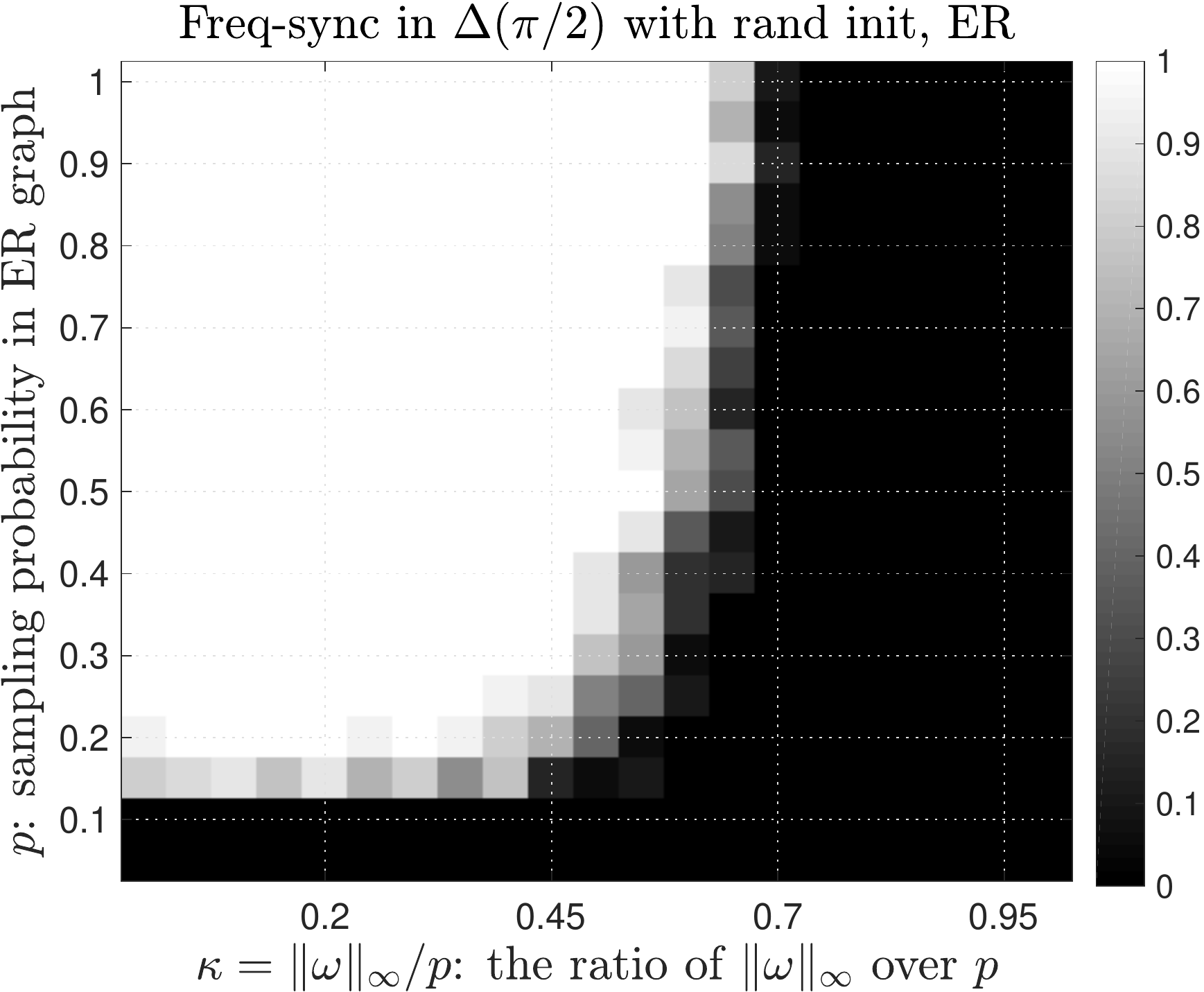}
\end{minipage}
\caption{Phase transition plot for frequency synchronization solutions: ER graphs and random initialization.}
\label{fig:ER}
\end{figure}

\subsection{Frequency synchronization on circulant networks}

The homogeneous Kuramoto model on circulant networks exhibits very interesting behaviors. It has been shown that the uniform twisted state $\btheta_{\text{twist}} := 2\pi n^{-1}[1,\cdots,n]$ is a stable equilibrium for a large family of circulant networks such as Wiley-Strogatz-Girvan (WSG) networks~\cite{WSG06} and Harary graphs~\cite{CM15}. Now we will focus on the critical coupling thresholds on WSG networks. 

The WSG$(k)$ networks are constructed by starting with an $n$-cycle graph and then connecting each node with its closest $k$ neighbors. As a result, the degree of each node is $2k.$
We set up numerical experiments as follows: let $k = \lfloor np/2\rfloor$, i.e., the largest integer smaller than or equal to $np/2$, for $ p =0.05,0.1,\cdots,1$. Note that WSG($k$) is also a $\mu$-dense network with $\mu = p.$
We generate the natural frequencies $\omega_i$ according to uniform distribution $[-\kappa p,\kappa p]$ for $\kappa$ ranging from $0,0.05, \cdots,1.$ For each pair of $(\kappa, p)$, we run 20 experiments and count the number of cases in which the iterates converge to a stable synchronized solution. 
In particular, to address the dependence of the dynamics on initialization, two types of initial states are chosen: (i) $\btheta(0)$ is uniformly distributed over $[0,2\pi]$; (ii) $\btheta(0)=\frac{2\pi}{n}[1,\cdots,n]$ is a uniform twisted state.

From Figure~\ref{fig:CN_rand}, we can see that the phase transition plot looks similar to Figure~\ref{fig:ER} for random initialization but the white region shrinks: on circulant networks, random initialization will lead to a frequency-synchronized solution if $p \geq 0.4$ and $\|\omega\|_{\infty} \leq 0.7 p$ while $p \geq 0.2$ and $\|\omega\|_{\infty} \leq 0.75p$ suffice for ER graphs. 

However, if the initialization is~\emph{a uniform-twisted state}, Figure~\ref{fig:CN_unif} looks very different from Figure~\ref{fig:CN_rand}. 
There exists a large set of $(\kappa,p)$ in Figure~\ref{fig:CN_unif} such that starting from the uniform twisted state, the trajectory does not converge to a cohesive solution. We present a more concrete example in Figure~\ref{fig:WSG_twist}: Figure~\ref{fig:WSG_twist} demonstrates that the oscillators reach a stable non-cohesive frequency-synchronized solution outside $\Delta(\pi/2)$ for small $\omega_i$, with initialization $\btheta_{\text{twist}}$. However, this stable non-cohesive solution disappears if the intrinsic frequency gets stronger. Instead, they converge to a stable solution within $\Delta(\pi/2).$

\begin{figure}[h!]
\begin{minipage}{0.48\textwidth}
\centering
\includegraphics[width=82mm]{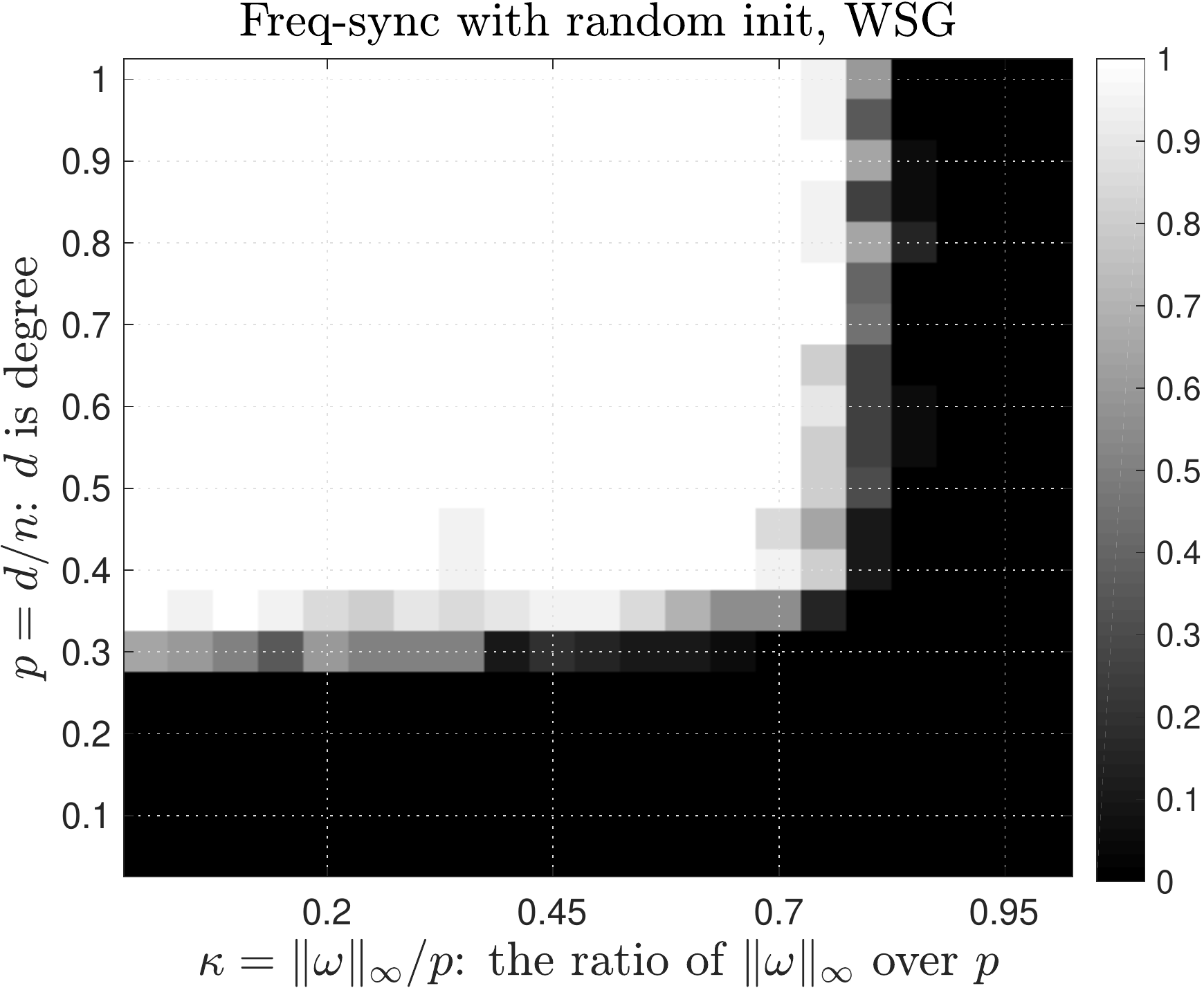}
\end{minipage}
\hfill
\begin{minipage}{0.48\textwidth}
\centering
\includegraphics[width=82mm]{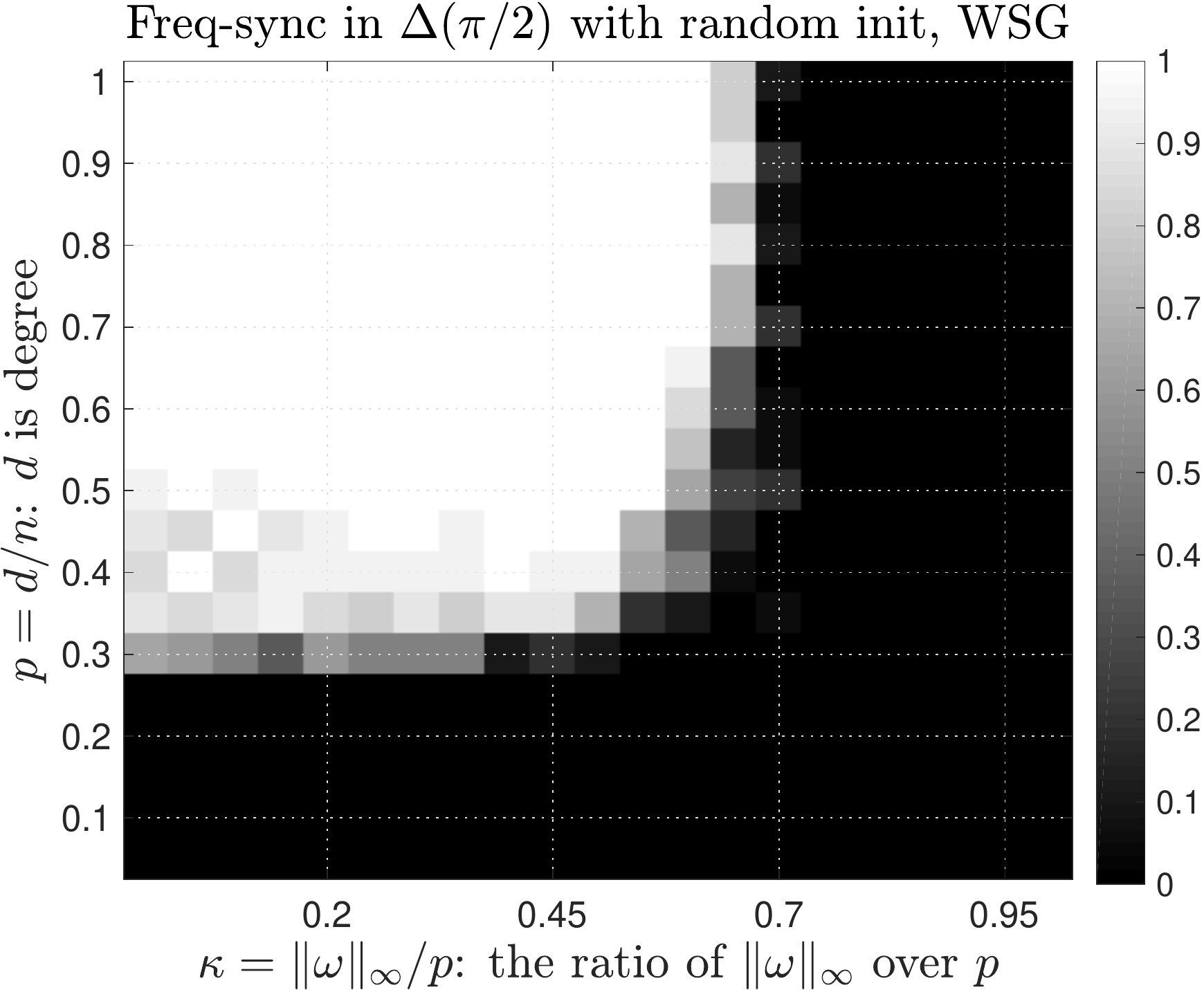}
\end{minipage}
\caption{Phase transition plot for frequency synchronization solutions: WSG circulant networks with degree $np$ and uniformly random initialization.}
\label{fig:CN_rand}
\end{figure}

Here is one explanation for Figure~\ref{fig:CN_unif}: if $\omega_i = 0$, the uniform-twisted state is always a stable equilibrium (with strictly positive second smallest eigenvalue of $\BJ(\btheta)$) if $k$ is smaller than 0.34 in~\cite{WSG06}. If $\omega_i$ is small compared with the second smallest eigenvalue $\lambda_2(\BJ(\btheta))$, this perturbation on $\omega_i$ does not change dynamics qualitatively. The uniform-twisted state still lies in the basin of attraction and the oscillators will converge to a non-cohesive frequency synchronized solution if starting from $\btheta_{\text{twist}}$. 
On the other hand, if the intrinsic frequency gets larger, the iterate will escape from the basin of attraction around the uniform twisted solution and reach another synchronization solution if there exists one.
If $\omega_i$ gets too strong, a stable synchronized state no longer exists. 

That is why the left plot in Figure~\ref{fig:CN_unif} shows that for fixed $p < 0.6$, the oscillators synchronize to a~\emph{non-cohesive} solution for small $\omega_i$; as $\omega_i$ gets stronger, it starts to become incoherent; if the strength of $\omega_i$ exceeds certain threshold, the oscillators start synchronizing to a~\emph{cohesive} synchronized solution; but when $\|\bomega\|_{\infty}$ is larger than $0.7p$, the system no longer synchronizes. The two plots in Figure~\ref{fig:CN_unif} indicate that there exists a~\emph{boundary} for $(\kappa,p)$ which distinguishes a cohesive synchronized state from to a non-cohesive one. We leave the explanation of this boundary for future research. On the other hand, if $p > 0.7$ and the network is very densely connected, either choosing a random initialization or picking $\btheta(0) = \btheta_{\text{twist}}$ yields a phase-cohesive synchronized solution for $\|\bomega\|_{\infty} < 0.7p.$ It is because $\btheta_{\text{twist}}$ is not a stable equilibrium for the homogeneous Kuramoto model if $p > 0.68$; also the interaction between oscillators gets stronger than the effect of natural frequency due to the higher network connectivity, which makes the global synchronization more likely.

\begin{figure}[h!]
\centering
\begin{minipage}{0.48\textwidth}
\centering
\includegraphics[width=80mm]{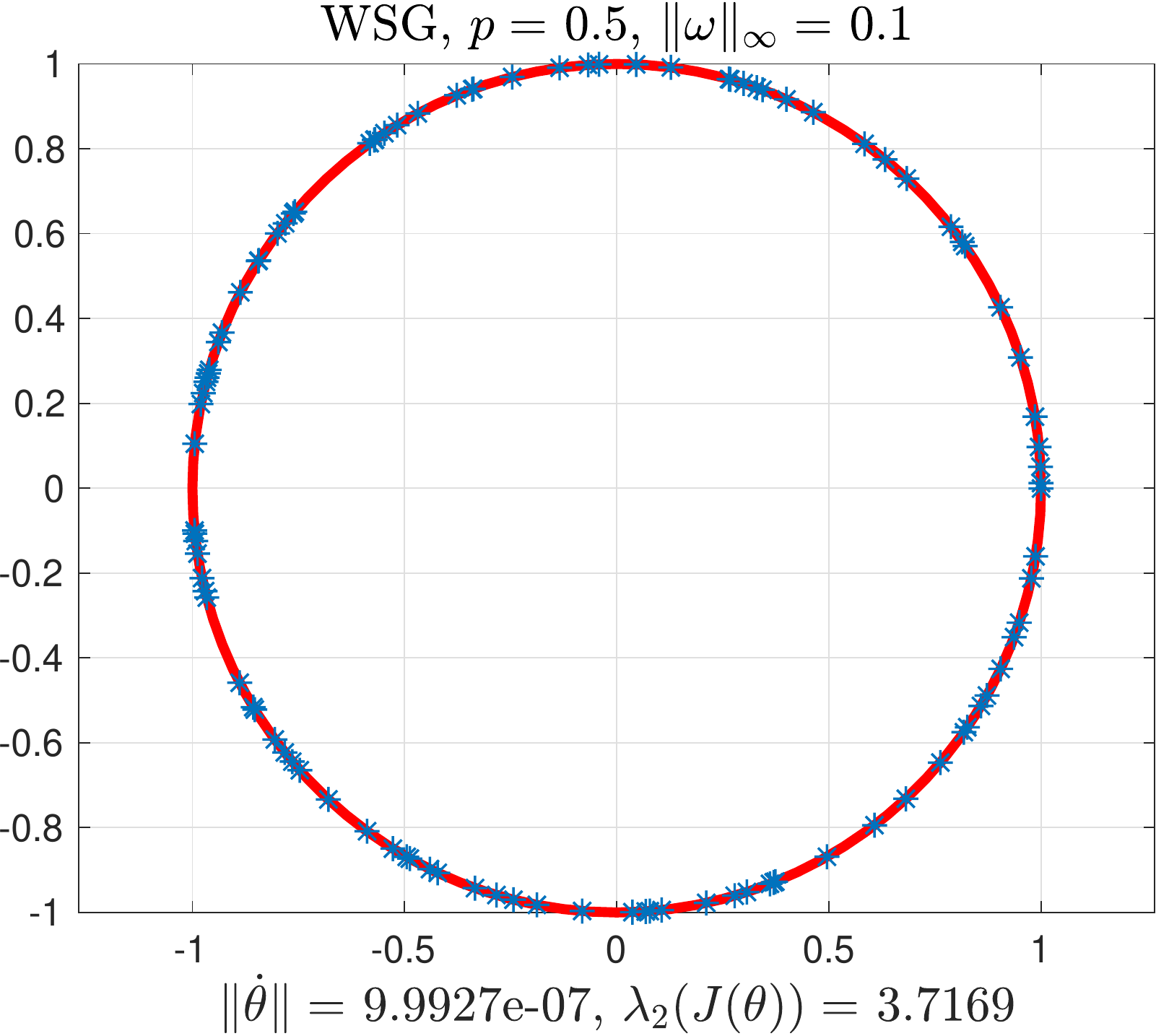}
\end{minipage}
\hfill
\begin{minipage}{0.48\textwidth}
\centering
\includegraphics[width=80mm]{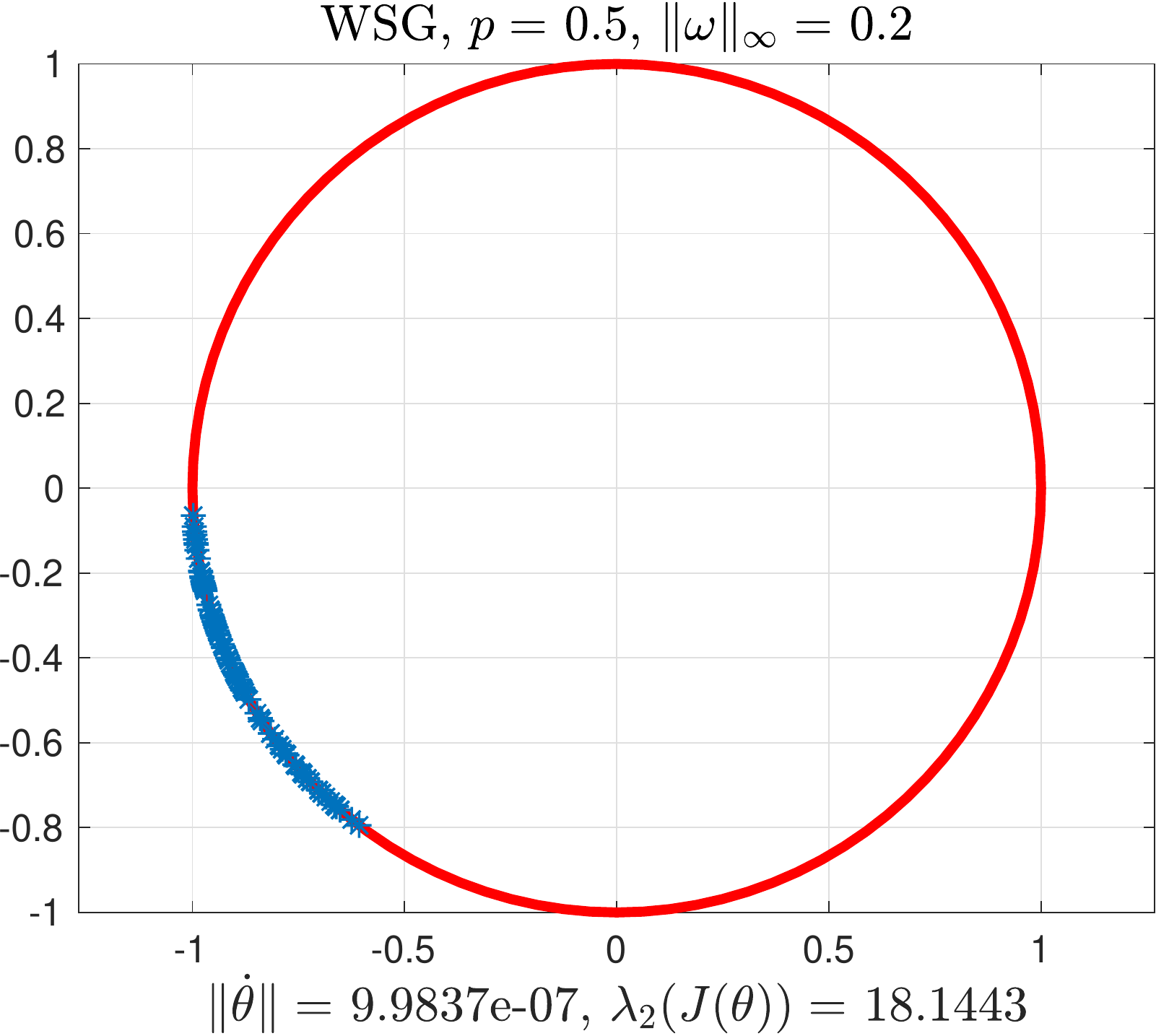}
\end{minipage}
\caption{Phase cohesiveness. Left: $p=0.5$ and $\|\omega\|_{\infty} = 0.1$. In this case, a non-cohesive synchronized state exists for small $\omega_i$ ; Right: $p=0.5$ and $\|\omega\|_{\infty} = 0.2$; For large $\omega_i$, setting $\btheta(0)=\btheta_{\text{twist}}$ still leads to a non-cohesive stable equilibrium.}
\label{fig:WSG_twist}
\end{figure}

\begin{figure}[h!]
\begin{minipage}{0.48\textwidth}
\centering
\includegraphics[width=80mm]{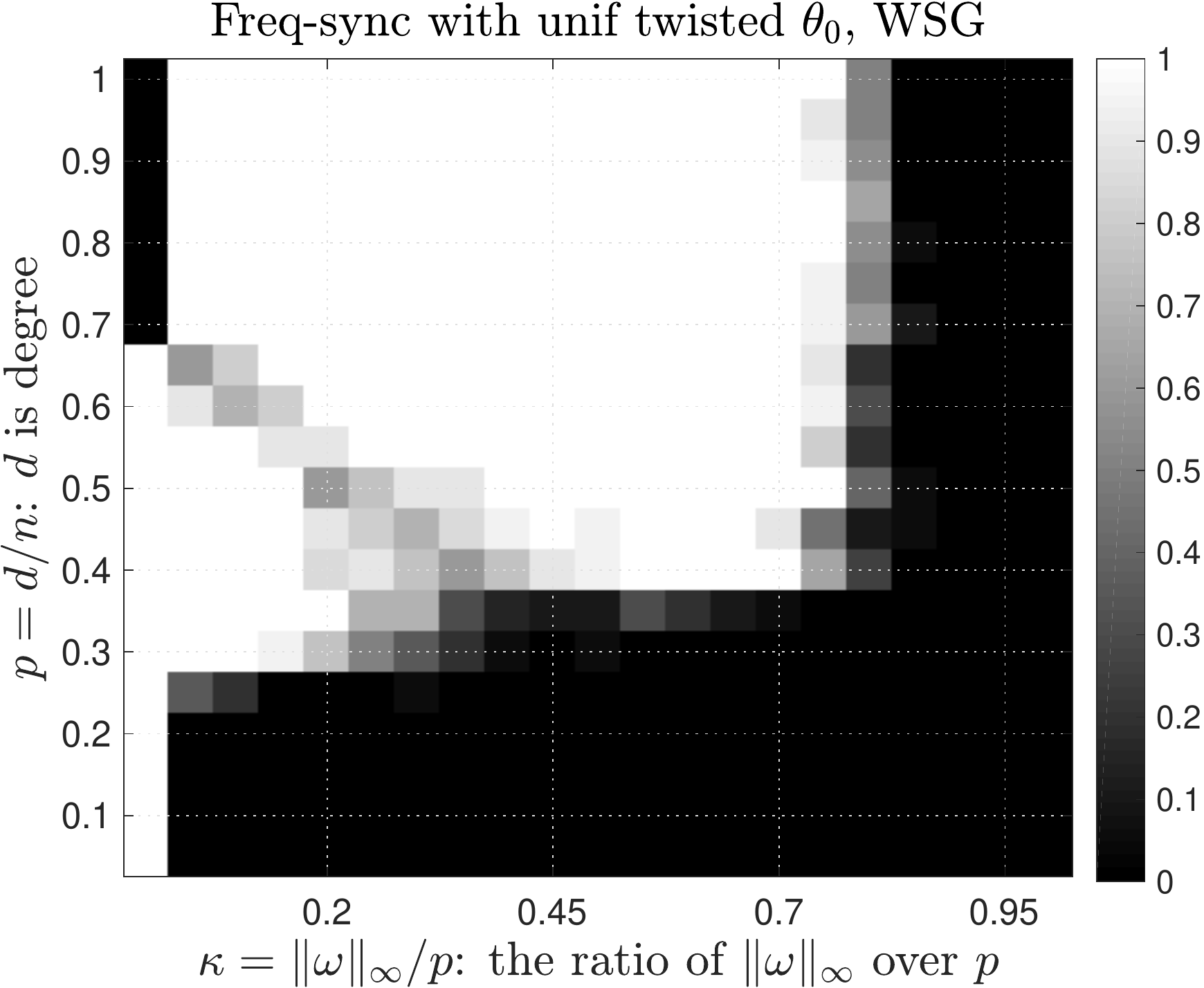}
\end{minipage}
\hfill
\begin{minipage}{0.48\textwidth}
\centering
\includegraphics[width=80mm]{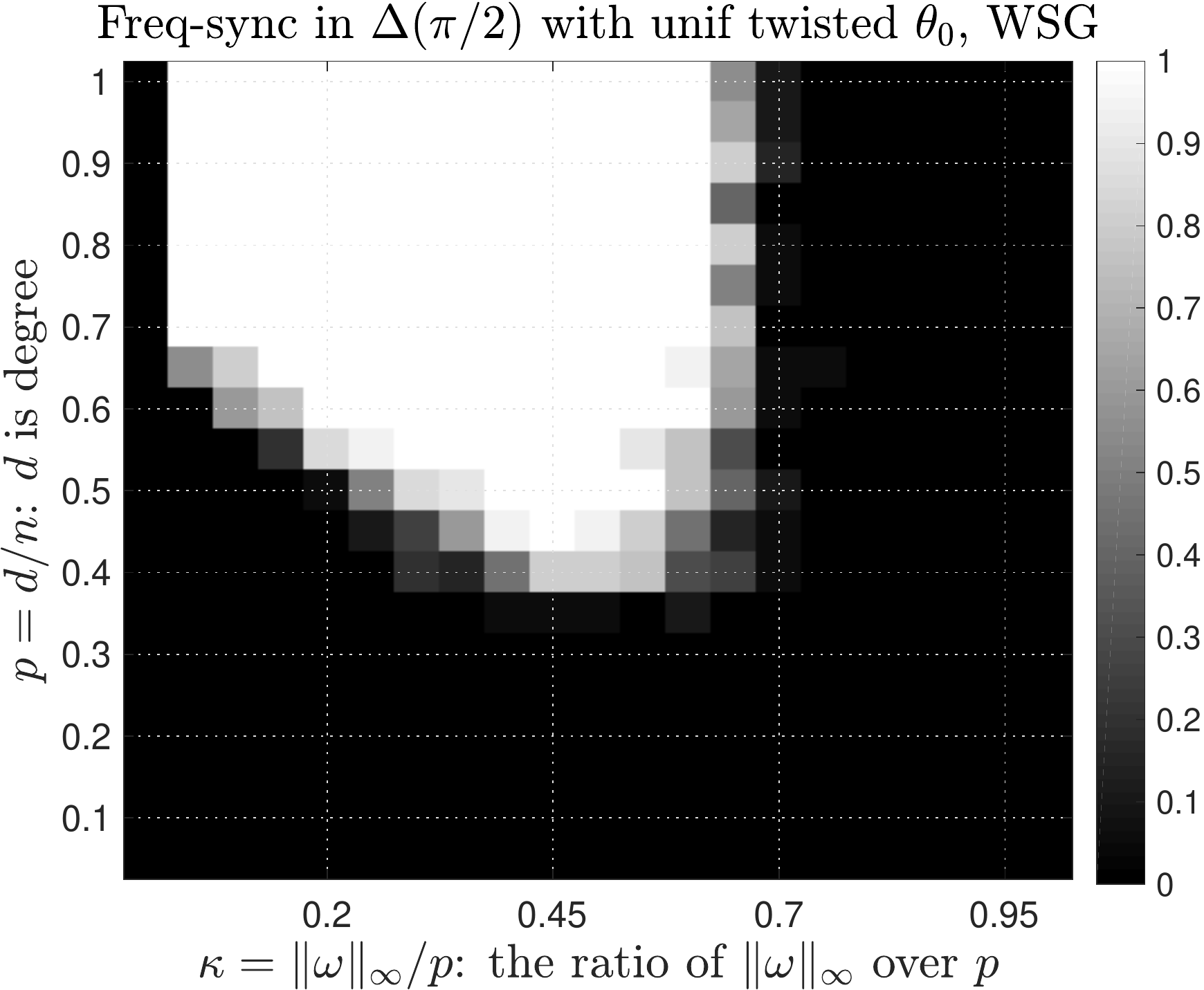}
\end{minipage}
\caption{Phase transition plot for frequency synchronization solutions: WSG circulant networks with degree $np$ and initialization $\btheta(0) = \btheta_{\text{twist}}.$} 
\label{fig:CN_unif}
\end{figure}

\section{Proofs}\label{s:proofs}

\subsection{Characterization of stable equilibria}

Define the unnormalized order parameter
\[
r(\btheta) = \sum_{j=1}^n e^{\mi \theta_j}
\]
whose magnitude equals 1 if all oscillators are fully synchronized.

\begin{lemma}\label{lem:quad}
Consider the Kuramoto model~\eqref{model:kura2} with natural frequency $\{\omega_i\}_{i=1}^n$ and $\sum_{i=1}^n \omega_i = 0$. Suppose~\eqref{eq:main} holds,
then all the stable equilibria, if existing, must be in a cohesive region $\Delta(\pi/2).$
\end{lemma}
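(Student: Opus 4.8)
The plan is to argue by contradiction: assuming $\btheta$ is a stable equilibrium with $\btheta\notin\Delta(\pi/2)$, I would derive a contradiction with either the positive semidefiniteness of $\BJ(\btheta)$ or the hypothesis \eqref{eq:main}. Since $0$ is always an eigenvalue of $\BJ(\btheta)$, stability ($\lambda_2(\BJ(\btheta))>0$) gives $\BJ(\btheta)\succeq 0$; in particular $(\BJ(\btheta))_{ii}\ge 0$ for every $i$, and $\bv^{\top}\BJ(\btheta)\bv\ge 0$ for every $\bv\in\RR^n$. First I would record the elementary geometric fact that $\btheta\notin\Delta(\pi/2)$ is equivalent to the existence of a pair of oscillators at geodesic distance $\ge\pi/2$ on the circle, i.e.\ $\cos(\theta_i-\theta_j)\le 0$ for some $i,j$; hence it suffices to show that a stable equilibrium has $\cos(\theta_i-\theta_j)>0$ for all $i,j$. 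Because \eqref{model:kura2}, the fixed-point equations \eqref{def:fp}, and $\Delta(\pi/2)$ are all invariant under adding a common constant to the phases, I would also rotate so that the order parameter $r(\btheta)=\sum_j e^{\mi\theta_j}$ is real and nonnegative, and set $r:=|r(\btheta)|$; then $\sum_j\sin\theta_j=0$, which both makes $(\sin\theta_1,\dots,\sin\theta_n)^\top$ a legitimate test vector orthogonal to $\bone_n$ and lets all subsequent deviations be measured against the fixed reference direction $0$.

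The workhorse is a node-by-node consequence of stability. For each $i$ write $r_ie^{\mi\psi_i}:=\sum_j a_{ij}e^{\mi\theta_j}$ for the (unnormalized) local mean field. Then $\dot\theta_i=\omega_i-\tfrac{Kr_i}{n}\sin(\theta_i-\psi_i)$, so the fixed-point condition reads $r_i\sin(\theta_i-\psi_i)=\tfrac{n}{K}\omega_i$, while $(\BJ(\btheta))_{ii}=\sum_j a_{ij}\cos(\theta_i-\theta_j)=r_i\cos(\theta_i-\psi_i)$. Nonnegativity of the diagonal forces $\cos(\theta_i-\psi_i)\ge 0$: each oscillator lies on the aligned branch relative to its own mean field, never the antipodal one. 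Combined with the fixed-point condition this gives $\theta_i-\psi_i=\arcsin\!\big(\tfrac{n\omega_i}{Kr_i}\big)\in[-\pi/2,\pi/2]$, hence $|\theta_i-\psi_i|\le\arcsin\!\big(\tfrac{n\|\omega\|_\infty}{Kr_i}\big)$, which is small as soon as $r_i$ is of order $n$.

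Next I would bring in $\mu$-density. Node $i$ has at most $(1-\mu)(n-1)$ non-neighbors, so $\big|r_ie^{\mi\psi_i}-r\big|=\big|e^{\mi\theta_i}+\sum_{j\not\sim i}e^{\mi\theta_j}\big|\le 1+(1-\mu)(n-1)$; this bounds the angle between the local mean-field direction $\psi_i$ and the reference direction, and gives $r_i\ge r-1-(1-\mu)(n-1)$. Feeding this back, every $\theta_i$ lies within $\arcsin\!\big(\tfrac{1+(1-\mu)(n-1)}{r_i}\big)+\arcsin\!\big(\tfrac{n\|\omega\|_\infty}{Kr_i}\big)$ of the order-parameter direction; if this quantity is $<\pi/4$ for all $i$, then every pairwise phase difference is $<\pi/2$ and $\btheta\in\Delta(\pi/2)$, contradicting the assumption. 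The argument therefore reduces to a lower bound on $r$ valid at a stable equilibrium of a $\mu$-dense network, and it is here that \eqref{eq:main} enters: one shows, by a bootstrap combining the fixed-point conditions, $\BJ(\btheta)\succeq 0$ (in particular the test vector $(\sin\theta_i)$), and $\mu$-density — heuristically, ``$r$ too small'' would push some $\theta_i$ nearly antipodal to its mean field and violate $(\BJ(\btheta))_{ii}\ge 0$ — that $r$ is bounded below by a definite fraction of $n$, and the threshold $\|\omega\|_\infty<K(\sqrt{\mu-\tfrac34}+\mu-1)$ is precisely what makes the resulting scalar inequality close; the $\sqrt{\mu-\tfrac34}$ should be traced to the discriminant of a quadratic in the cosine of the relevant angle, whose extremal configuration is a near-balanced two-cluster state.

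I expect the lower bound on $r$ — together with the bookkeeping needed to land on the stated constant rather than a cruder one — to be the main obstacle: the estimates $\big|r_ie^{\mi\psi_i}-r\big|\le 1+(1-\mu)(n-1)$ and $|\theta_i-\psi_i|\le\arcsin(n\|\omega\|_\infty/(Kr_i))$ must be balanced optimally, and the worst case, the two-cluster configuration with two equal groups at phases $\pm\psi$ and $2\psi$ close to $\pi/2$, must be singled out, since it is there that positive semidefiniteness and the bound on $\|\omega\|_\infty$ are simultaneously tight. Everything else is a sequence of triangle inequalities on the unit circle.
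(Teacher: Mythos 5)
Your overall architecture (first-order fixed-point identities plus positive semidefiniteness of $\BJ(\btheta)$ plus $\mu$-density, used to pin every phase close to the order-parameter direction) is the same as the paper's, but the decisive quantitative ingredient --- a lower bound of order $n$ on the order parameter at a stable equilibrium --- is exactly the step you leave unproven, and the heuristic you offer for it is not just incomplete but wrong. Diagonal nonnegativity, $(\BJ(\btheta))_{ii}=r_i\cos(\theta_i-\psi_i)\ge 0$, cannot force $r$ to be large: take $\mu$ slightly above $\tfrac{3-\sqrt{2}}{2}$, split the vertices into two equal clusters at phases $0$ and $\pi$, and join each node to all of its own cluster and to roughly $(\mu-\tfrac12)n$ nodes of the other cluster; this graph is $\mu$-dense, the configuration is a fixed point of the homogeneous model, every diagonal entry is about $(1-\mu)n>0$, yet $r=0$. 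So ``$r$ too small pushes some oscillator nearly antipodal to its local mean field'' is false; what kills such configurations is the full quadratic form, not the diagonal. The paper obtains the needed bound precisely there: testing $\BJ(\btheta)\succeq 0$ against $\BQ\BQ^{\top}$ (equivalently against the two vectors $(\cos\theta_i)_i$ and $(\sin\theta_i)_i$ simultaneously) gives $\lag\BA,\BQ\BQ^{\top}\rag\ge\lag\BA,\BQ\BQ^{\top}\circ\BQ\BQ^{\top}\rag$, and with $\|\BQ\BQ^{\top}\|_F^2\ge n^2/2$ and $\mu$-density this yields $\|r(\btheta)\|^2\ge\bigl(2\mu-\tfrac32\bigr)n^2+2(1-\mu)n$. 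You mention the test vector $(\sin\theta_i)_i$ in passing but never carry out this (or any) computation, and you yourself flag the lower bound on $r$ as the ``main obstacle''; since everything downstream hinges on it, the proof has a genuine gap at its core.

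Two secondary remarks. First, your per-node detour through the local mean field ($|\theta_i-\psi_i|\le\arcsin(n\|\omega\|_\infty/(K r_i))$ combined with $|r_ie^{\mi\psi_i}-r e^{\mi\psi}|\le 1+(1-\mu)(n-1)$) is an interesting alternative to the paper's endgame: the paper instead bounds $\|r\|\,|\sin(\psi-\theta_i)|\le nK^{-1}\|\omega\|_\infty+(1-\mu)(n-1)$ in one stroke from the global fixed-point identity, which only localizes each $\theta_i$ into one of two antipodal quadrants, and then excludes a split between them with the $\pm1$ test vector $\bz=\bone_\Gamma-\bone_{\Gamma^c}$, $\bz^{\top}\BJ\bz<0$; your use of $\cos(\theta_i-\psi_i)\ge0$ would render that exclusion step unnecessary. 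Second, however, splitting the deviation into two arcsines and passing through $r_i\ge r-1-(1-\mu)(n-1)$ is strictly lossier than the paper's single estimate, so even once the correct lower bound on $\|r\|$ is supplied it is not clear your bookkeeping closes at the stated threshold $\|\omega\|_\infty<K\bigl(\sqrt{\mu-\tfrac34}+\mu-1\bigr)$ rather than at a worse constant; you acknowledge this but do not resolve it.
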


\begin{proof}[Proof of Lemma~\ref{lem:quad}]
The proof generalizes the idea in~\cite[Theorem 3.1]{LXB19}: first show that if an equilibrium is stable, its order parameter is sufficiently large, which is made possible by using the second-order necessary condition; then the first-order critical condition guarantees that all oscillators are inside the cohesive region $\Delta(\gamma)$ for some $\gamma < \pi/2$.

Suppose $\btheta$ is a stable equilibrium, then the Jacobian matrix is negative definite and thus $\BJ(\btheta) = \diag(\BA\BQ\BQ^{\top}) - \BA\circ \BQ\BQ^{\top} \succeq 0$ holds where $\BJ(\btheta)$ is defined in~\eqref{def:J}. As a result, 
\begin{equation}\label{eq:2nd}
\lag \BJ(\btheta), \BQ\BQ^{\top}\rag \geq 0 \Longleftrightarrow \lag  \BA, \BQ\BQ^{\top}\rag \geq \lag \BA, \BQ\BQ^{\top}\circ \BQ\BQ^{\top}\rag
\end{equation}
since $\BQ\BQ^{\top}\succeq 0$ with diagonal entries equal to 1. This condition immediately implies a lower bound of the magnitude of the unnormalized order parameter:
\begin{align}
\|r(\btheta)\|^2 & = \sum_{i=1}^n\sum_{j=1}^n\cos(\theta_i - \theta_j)  = \lag \bone_n\bone_n^{\top}, \BQ\BQ^{\top}\rag \\
&  = \lag \BA, \BQ\BQ^{\top}\rag + \lag \bone_n\bone_n^{\top} - \BA, \BQ\BQ^{\top} \rag \nonumber\\
& \geq  \lag \BA,\BQ\BQ^{\top}\circ \BQ\BQ^{\top}\rag + \lag \bone_n\bone_n^{\top} - \BA, \BQ\BQ^{\top} \rag \nonumber \\
& = \lag \bone_n\bone_n^{\top}, \BQ\BQ^{\top}\circ \BQ\BQ^{\top}\rag + \lag \bone_n\bone_n^{\top} - \BA, \BQ\BQ^{\top}-\BQ\BQ^{\top}\circ \BQ\BQ^{\top}\rag \label{eq:r2}
\end{align}
where the first inequality follows from~\eqref{eq:2nd} and $\bone_n$ is the $n\times 1$ vector with all entries equal to 1.
Note that $\BQ\BQ^{\top}$ is rank-2, positive semidefinite, and has its trace equal to $n$. Therefore, 
\[
\lag \bone_n\bone_n^{\top}, \BQ\BQ^{\top}\circ \BQ\BQ^{\top}\rag = \|\BQ\BQ^{\top}\|_F^2 \geq \frac{1}{2} |\Tr(\BQ\BQ^{\top})|^2 = \frac{n^2}{2}
\]
where $\|\cdot\|_F$ stands for the matrix Frobenius norm.
For the second term in~\eqref{eq:r2}, we simply use the fact that $|(\BQ\BQ^{\top})_{ij}| \leq 1$ and $\lag \BA, \bone_n\bone_n^{\top}\rag$ is at least $\mu n(n-1) + n$ since $\BA$ is $\mu$-dense. Then
\begin{align*}
& \lag \bone_n\bone_n^{\top} - \BA, \BQ\BQ^{\top}-\BQ\BQ^{\top}\circ \BQ\BQ^{\top} \rag  \geq -2 \lag \bone_n\bone_n^{\top} - \BA, \bone_n\bone_n^{\top}\rag \\
& \qquad\geq  -2n^2 +2\mu n(n-1) + 2n = -2 (1- \mu )n^2 + 2(1-\mu)n.
\end{align*}
Therefore, we get
\begin{equation}\label{eq:r_low}
\|r(\btheta)\|^2 \geq \frac{n^2}{2} -2 (1- \mu )n^2 + 2(1-\mu)n = \left( 2\mu - \frac{3}{2}\right)n^2 + 2(1-\mu)n.
\end{equation}

Next, we bound the phase of the fixed point (if it exists).
Let's compute the phase difference between $r(\btheta)$ and $i$th oscillator $e^{\mi\theta_i}$
\[
re^{-\mi\theta_i} = \sum_{j=1}^n e^{\mi (\theta_j - \theta_i)}= \sum_{j=1}^n \left(\cos (\theta_j - \theta_i) + \mi \sin(\theta_j - \theta_i) \right)= \|r\| \cdot e^{\mi(\psi - \theta_i)} 
\]
where $\psi$ is the phase of $r(\btheta).$
By taking the imaginary part, it holds that
\begin{align*}
\|r\|\cdot \sin(\psi - \theta_i) & =  \sum_{j=1}^n \sin(\theta_j - \theta_i)  \\
& =   \sum_{j=1}^n a_{ij}\sin(\theta_i - \theta_j) + \sum_{j=1}^n (1-a_{ij})\sin(\theta_i - \theta_j) \\
& = nK^{-1}\omega_i + \sum_{j=1}^n (1-a_{ij})\sin(\theta_i - \theta_j) 
\end{align*}
where the last inequality uses $\dot{\theta}_i = 0$ and $\sum_{j=1}^n a_{ij}\sin(\theta_i - \theta_j) = nK^{-1}\omega_i.$
Therefore, we obtain 
\begin{equation}\label{eq:1st}
\|r\|\cdot |\sin(\psi - \theta_i)| \leq nK^{-1}\|\omega\|_{\infty} + (1 - \mu)(n-1)
\end{equation}
where the right hand side is independent of the index $i$. 
Combining~\eqref{eq:1st} with~\eqref{eq:r_low} leads to an upper bound of $|\sin(\psi - \theta_i)|$ and thus $|\psi - \theta_i|$, 
\begin{equation}\label{eq:sin}
 |\sin(\theta_i -\psi)| \leq \frac{nK^{-1}\|\omega\|_{\infty} + (1-\mu)(n-1)}{\sqrt{\left(2\mu - \frac{3}{2}\right)n^2 + 2(1-\mu)n}},~~1\leq i\leq n.
\end{equation}

\vskip0.2cm

We claim that under~\eqref{eq:main}, $\{\theta_i\}_{i=1}^n$ are inside two disjoint quadrants.
\begin{equation}\label{eq:quad}
\theta_i \in \left\{ \theta:  \psi - \frac{\pi}{4} < \theta < \psi + \frac{\pi}{4} \right\}\cup \left\{ \theta:  \psi + \frac{3\pi}{4} < \theta < \psi + \frac{5\pi}{4} \right\}.
\end{equation}
To ensure this, it suffices to bound~\eqref{eq:sin} by $1/\sqrt{2}$ which is guaranteed by
\[
\frac{\|\omega\|_{\infty}}{K}  \leq \sqrt{\mu - \frac{3}{4} + \frac{1-\mu}{n}}  + \mu - 1.
\]

\vskip0.25cm

In fact, as will be shown below, all $\{\theta_i\}_{i=1}^n$ are in the same quadrant if $\btheta$ is a stable equilibrium. Otherwise, we can construct a test vector such that the quadratic form w.r.t. $\BJ(\btheta)$ is smaller than 0, as discussed in~\cite{Taylor12,LXB19}.
We provide the proof here to make the presentation self-contained. Remember the stability is directly to the second smallest eigenvalue of $\BJ(\btheta)$ since the eigenvalues of Jacobian matrix and $\BJ(\btheta)$ have completely opposite signs. Suppose not every $\theta_i$ is located in the same quadrant, then we define
\[
\bz = \bone_{\Gamma} - \bone_{\Gamma^c}\in\RR^n, \quad z_i = \begin{cases}
1, & i \in\Gamma, \\
-1, & i\in\Gamma^c,
\end{cases}
\]
where $\Gamma = \{i: -\pi/4 < \theta_i - \psi < \pi/4\}$ and $\Gamma^c = \{i: 3\pi/4 < \theta_i    -\psi < 5\pi/4\}.$ If so,
\[
\bz^{\top}\BJ \bz = \sum_{1\leq i,j\leq n}a_{ij}\cos(\theta_i - \theta_j)(z_i - z_j)^2 = 8\sum_{i\in\Gamma,j\in\Gamma^c} a_{ij}\cos(\theta_i - \theta_j) < 0
\]
where all $a_{ij}$ are nonnegative with at least one $a_{ij} = 1$ for some $i\in\Gamma$ and $j\in\Gamma^c,$ and $\cos(\theta_i - \theta_j) < 0.$
This implies that the smallest eigenvalue of $\BJ(\btheta)$ is strictly less than 0 while $\BJ(\btheta)$ is positive semidefinite if $\btheta$ is a stable equilibrium, which is a contradiction. 

Now, we wrap up our discussion here: we have shown under~\eqref{eq:main}, if there exists a stable equilibrium, all $\{\theta_i\}_{i=1}^n$ are located in the same quadrant, which means
\[
|\theta_i - \theta_j| < \frac{\pi}{2}, \quad\forall i\neq j,
\]
following from triangle inequality. 
\end{proof}

\subsection{Existence of an equilibrium within the cohesive region}
\begin{lemma}\label{lem:basin}
If $\{\theta_i\}_{i=1}^n$ are initialized in $\Delta(\gamma)$ with $\gamma < \pi/2$, then they will stay inside $\Delta(\gamma)$ if
\begin{equation}\label{eq:basin}
K\sin(\gamma) > \frac{\omega_{\max} - \omega_{\min}}{2\mu-1}, \qquad \mu > \frac{1}{2}.
\end{equation}
In other words, $\Delta(\gamma)$ is a basin of attraction.

\end{lemma}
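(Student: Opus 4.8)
The plan is to show that the spread $V(\btheta):=\max_i\theta_i-\min_i\theta_i$ cannot grow across the boundary $\{V=\gamma\}$, and then upgrade this to forward invariance of $\Delta(\gamma)=\{V<\gamma\}$; this is the D\"orfler--Bullo arc-contraction argument adapted to a $\mu$-dense graph. The function $V$ is locally Lipschitz, and along any solution of~\eqref{model:kura2} its upper right Dini derivative is
\[
D^+V(\btheta(t))=\max_{m\in I_{\max}(t)}\dot\theta_m(t)-\min_{\ell\in I_{\min}(t)}\dot\theta_\ell(t),
\]
where $I_{\max}(t),I_{\min}(t)$ index the largest and smallest phases at time $t$; this is standard nonsmooth calculus (the directional derivative of a max of smooth functions), the tie cases being absorbed into the max/min over these index sets.

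The core estimate bounds $D^+V$ at a configuration with $v:=V(\btheta)\in(0,\gamma]$. Fixing $m\in I_{\max}$, $\ell\in I_{\min}$ one has $\theta_m-\theta_j,\ \theta_j-\theta_\ell\in[0,v]\subseteq[0,\pi/2)$ for all $j$, so every sine below lies in $[0,\sin v]$, and
\[
\dot\theta_m-\dot\theta_\ell=(\omega_m-\omega_\ell)-\frac{K}{n}\Big[\sum_{j}a_{mj}\sin(\theta_m-\theta_j)+\sum_{j}a_{\ell j}\sin(\theta_j-\theta_\ell)\Big].
\]
I would lower bound the bracket in two moves: (i) delete the adjacency weights, which, since each vertex has at most $(1-\mu)(n-1)$ non-neighbours, costs at most $2(1-\mu)(n-1)\sin v$; (ii) apply $\sin(\theta_m-\theta_j)+\sin(\theta_j-\theta_\ell)=2\sin\frac v2\cos(\tfrac{\theta_m+\theta_\ell}{2}-\theta_j)\ge 2\sin\frac v2\cos\frac v2=\sin v$, valid because $|\tfrac{\theta_m+\theta_\ell}{2}-\theta_j|\le v/2<\pi/2$, and sum over the $n$ indices. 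Combining, the bracket is at least $n\sin v-2(1-\mu)(n-1)\sin v\ge(2\mu-1)\,n\sin v$ (using $n-1\le n$ and $\mu>\tfrac12$). Hence, uniformly over $m\in I_{\max}$, $\ell\in I_{\min}$,
\[
D^+V(\btheta(t))\le(\omega_{\max}-\omega_{\min})-K(2\mu-1)\sin\!\big(V(\btheta(t))\big).
\]

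It remains to convert this differential inequality into invariance. By~\eqref{eq:basin}, $K(2\mu-1)\sin\gamma>\omega_{\max}-\omega_{\min}$; since $\sin\gamma<1$ the ratio $\frac{\omega_{\max}-\omega_{\min}}{K(2\mu-1)}$ lies in $[0,1)$, so $\gamma_0:=\arcsin\frac{\omega_{\max}-\omega_{\min}}{K(2\mu-1)}\in[0,\pi/2)$ satisfies $\gamma_0<\gamma$, and the bound above is strictly negative whenever $V(\btheta(t))\in(\gamma_0,\gamma]$. A first-exit-time argument then makes each closed sublevel set $\{\btheta:V(\btheta)\le v^\ast\}$, $v^\ast\in(\gamma_0,\gamma)$, positively invariant: at the first time $t_0$ a trajectory from that set reached $V=v^\ast$ one would have $D^+V(\btheta(t_0))<0$, so $V$ could not exceed $v^\ast$ just after $t_0$. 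Finally, given $\btheta(0)\in\Delta(\gamma)$, pick $v^\ast\in(\max\{\gamma_0,V(\btheta(0))\},\gamma)$ (a nonempty interval since $\gamma_0<\gamma$ and $V(\btheta(0))<\gamma$); then $\btheta(0)\in\{V\le v^\ast\}\subseteq\Delta(\gamma)$, an invariant set, so $\btheta(t)\in\Delta(\gamma)$ for all $t\ge0$.

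The delicate step is the lower bound on the bracketed coupling term: the sum-to-product identity is precisely what recovers the all-to-all constant, and it has to be combined with the degree bound so that the sparsity loss is charged against only $2(1-\mu)(n-1)\le 2(1-\mu)n$ copies of $\sin v$ — this bookkeeping, together with $n-1\le n$, is what produces the clean factor $2\mu-1$ rather than an $n$-dependent one. The Dini-derivative formula and the exit-time invariance argument are routine but need care because $V$ is nondifferentiable precisely when an extremal oscillator is not unique.
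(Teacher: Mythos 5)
Your proof is correct and follows essentially the same route as the paper's: the D\"orfler--Bullo arc-contraction argument, bounding $\dot\theta_{\max}-\dot\theta_{\min}$ by $\omega_{\max}-\omega_{\min}-(2\mu-1)K\sin(\cdot)$ via the same sine inequality and degree counting. The only differences are cosmetic: you complete the coupling sum to the all-to-all case and charge at most $2(1-\mu)(n-1)$ missing-edge terms, whereas the paper restricts to the common neighbors of the extremal pair (with a case split on $a_{ij}$), and you handle the nonsmoothness of the spread and the invariance step (Dini derivative, first-exit time) more carefully than the paper's informal treatment --- both yield the same $(2\mu-1)n\sin v$ bound.
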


\begin{proof}
The main idea of proof is adapted from that in~\cite{DB11}. We aim to show that under~\eqref{eq:basin} stated in the lemma, the range of the phase oscillators will not increase if they are initialized in the cohesive region $\Delta(\gamma).$

First of all, we assume there exists an arc of length $\gamma$ covering all $\{\theta_i\}_{i=1}^n$. 
Let $i = \argmax_{1\leq k\leq n} \theta_k$ and $j = \argmin_{1\leq k\leq n}\theta_k$. Note that $i$ and $j$ are unnecessarily unique, i.e., there could be multiple indices such that the maximum and minimum are attained respectively. 
For any $(i, j)$, we have
\[
\gamma = \theta_{i} - \theta_{j} > 0
\]
by definition where $\gamma$ is the range of all oscillators. We will investigate how $\theta_{i} - \theta_{j}$ changes with respect to the time $t$. From~\eqref{model:kura2}, we have
\begin{equation}\label{eq:dottheta_diff}
\dot{\theta}_{i}(t) - \dot{\theta}_{j}(t) = \omega_{i} - \omega_{j} - \frac{K}{n}\sum_{k=1}^n \left(a_{{i}k}\sin(\theta_i - \theta_k) + a_{{j}k}\sin(\theta_k - \theta_{j})\right).
\end{equation}
The goal is to find an upper bound for~\eqref{eq:dottheta_diff}. We claim that
\begin{equation}\label{eq:claim_dotdiff}
\dot{\theta}_{i}(t) - \dot{\theta}_{j}(t)  \leq \omega_{\max} - \omega_{\min} - (2\mu - 1)K\sin(\gamma).
\end{equation}
Note that $\omega_i - \omega_j \leq \omega_{\max} - \omega_{\min}$ holds, and thus the key is to bound the third term in~\eqref{eq:dottheta_diff}.
Apparently, since $\theta_i - \theta_k$ and $\theta_k-\theta_j$ are in $[0,\gamma]\subseteq [0,\pi/2)$, the third term is nonnegative. More precisely, we can improve it by using the fact that each node has at least  $\mu(n-1)$ neighbors. 

{\bf Case 1:} If $a_{ij} = 1$, then
\begin{align*}
& \sum_{k=1}^n \left(a_{{i}k}\sin(\theta_i - \theta_k) + a_{{j}k}\sin(\theta_k - \theta_{j})\right) \\
& \qquad = \sum_{k\neq i,j} \left(a_{{i}k}\sin(\theta_i - \theta_k) + a_{{j}k}\sin(\theta_k - \theta_{j})\right)  + 2\sin(\gamma) \\
& \qquad \geq \sum_{\{k:~a_{{i}k} = a_{{j}k} = 1\}\backslash\{i,j\}} (\sin(\theta_i - \theta_k) + \sin(\theta_k - \theta_j)) + 2\sin(\gamma) 
\end{align*}
where $\sin(\theta_i - \theta_j) = \gamma.$ Note that 
\[
|\{k:~a_{{i}k} = a_{{j}k} = 1\}\backslash\{i,j\}| \geq 2(\mu(n-1) - 1) - (n-2) = (2\mu - 1)n - 2\mu
\]
since $i$ and $j$ share at least $(2\mu - 1)n-2\mu$ neighbors
and
\begin{equation}\label{eq:sine}
\min_{0\leq \theta\leq \gamma} [\sin(\theta) + \sin(\gamma - \theta) ] \geq \sin(\gamma), \quad \gamma \leq \frac{\pi}{2}.
\end{equation}
Thus
\[
\sum_{k=1}^n \left(a_{{i}k}\sin(\theta_i - \theta_k) + a_{{j}k}\sin(\theta_k - \theta_{j})\right)  \geq ((2\mu - 1)n - 2\mu )\sin(\gamma) + 2\sin(\gamma) \geq (2\mu-1)n\sin(\gamma).
\]

{\bf Case 2:}  If $a_{ij} = 0$, then
\begin{align*}
& \sum_{k=1}^n \left(a_{{i}k}\sin(\theta_i - \theta_k) + a_{{j}k}\sin(\theta_k - \theta_{j})\right) \\
& \qquad \geq \sum_{\{k:~a_{{i}k} = a_{{j}k} = 1\}\backslash\{i,j\}} (\sin(\theta_i - \theta_k) + \sin(\theta_k - \theta_j))  \\
& \qquad \geq |  \{k:~a_{{i}k} = a_{{j}k} = 1\}\backslash\{i,j\} | \cdot \sin(\gamma) \\
& \qquad\geq (2\mu(n-1) - (n-2)) \cdot \sin(\gamma) \\
& \qquad \geq (2\mu - 1)n\sin(\gamma).
\end{align*}

Substituting this bound back into~\eqref{eq:dottheta_diff} gives rise to~\eqref{eq:claim_dotdiff}:
\[
\dot{\theta}_{i}(t) - \dot{\theta}_{j}(t) \leq \omega_i  -\omega_j - (2\mu - 1)K\sin(\gamma) \leq \omega_{\max}  -\omega_{\min} - (2\mu - 1)K\sin(\gamma)
\]
which means $\theta_i(t) - \theta_j(t)$ has a negative derivative at $t$ if
\[
\omega_{\max} - \omega_{\min} < (2\mu - 1)K\sin(\gamma)\Longleftrightarrow K\sin(\gamma) > \frac{\omega_{\max} - \omega_{\min}}{2\mu-1}.
\]
This implies that if the condition above holds, the range of $\{\theta_i\}_{i=1}^n$ will not increase, i.e., $\btheta$ stays in $\Delta(\gamma)$ for $\gamma < \pi/2.$
\end{proof}

\begin{lemma}\label{lem:global}
Under the assumption~\eqref{eq:basin}, if $\btheta(0)\in\Delta(\gamma)$ for $\gamma<\frac{\pi}{2}$, then $\btheta(t)$ satisfies exponential frequency synchronization and the frequency synchronized solution lies in $\Delta(\gamma)$.
\end{lemma}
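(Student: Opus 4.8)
The plan is to combine the positive‑invariance result of Lemma~\ref{lem:basin} with a standard Lyapunov argument on the frequency vector $\dot\btheta$, the key point being that along the trajectory the signed Laplacian $\BJ(\btheta)$ of \eqref{def:J} has a spectral gap bounded below by a positive constant. First I would localize the trajectory to a \emph{closed} cohesive ball sitting strictly inside $\Delta(\pi/2)$. Put $\gamma_{\min}:=\arcsin\!\left(\dfrac{\omega_{\max}-\omega_{\min}}{(2\mu-1)K}\right)$; this is well defined and lies in $[0,\pi/2)$ since \eqref{eq:basin} forces $\tfrac{\omega_{\max}-\omega_{\min}}{(2\mu-1)K}<\sin\gamma\le 1$, and moreover $\gamma_{\min}<\gamma$. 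The bound $\dot\theta_i(t)-\dot\theta_j(t)\le(\omega_{\max}-\omega_{\min})-(2\mu-1)K\sin\rho(t)$ established inside the proof of Lemma~\ref{lem:basin} for the indices $i,j$ realizing the largest and smallest phase, where $\rho(t):=\max_k\theta_k(t)-\min_k\theta_k(t)$, shows that the (upper Dini) derivative of $\rho$ is strictly negative whenever $\rho(t)\in(\gamma_{\min},\pi/2)$ and nonpositive at $\rho(t)=\gamma_{\min}$. Hence, starting from $\rho(0)<\gamma$, the range can never exceed $\bar\gamma:=\max(\rho(0),\gamma_{\min})<\gamma$, so $\btheta(t)$ stays in the compact set $\overline{\Delta(\bar\gamma)}\subseteq\Delta(\gamma)$ for all $t\ge0$; on this set $\cos(\theta_i-\theta_j)\ge\cos\bar\gamma>0$ for every pair $i,j$.

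Second, I would extract a uniform lower bound on $\lambda_2(\BJ(\btheta))$. On $\overline{\Delta(\bar\gamma)}$, $\BJ(\btheta)$ is the Laplacian of the network with nonnegative edge weights $a_{ij}\cos(\theta_i-\theta_j)$, and $\BJ(\btheta)-\cos\bar\gamma\,\BL(\BA)$ is once more such a Laplacian, with weights $a_{ij}(\cos(\theta_i-\theta_j)-\cos\bar\gamma)\ge0$, hence positive semidefinite; here $\BL(\BA)$ denotes the ordinary graph Laplacian. Since $\mu>\tfrac12$ the network is connected, so both $\BJ(\btheta)$ and $\BL(\BA)$ have kernel exactly $\mathrm{span}\{\bone_n\}$, and therefore $\lambda_2(\BJ(\btheta))\ge\cos\bar\gamma\,\lambda_2(\BL(\BA))$. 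Comparing with the complement graph $\BA^{c}$ gives $\lambda_2(\BL(\BA))=n-\lambda_{\max}(\BL(\BA^{c}))\ge n-2\max_k d_k(\BA^{c})\ge n-2(1-\mu)(n-1)=(2\mu-1)n+2(1-\mu)>0$. Consequently $\lambda_2(\BJ(\btheta))\ge c_0:=\cos\bar\gamma\,\bigl[(2\mu-1)n+2(1-\mu)\bigr]>0$ uniformly in $t$.

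Third comes the Lyapunov estimate and the convergence conclusion. Differentiating \eqref{model:kura2} in $t$ and using \eqref{def:J} gives $\ddot\btheta=-\tfrac{K}{n}\BJ(\btheta)\dot\btheta$; with $W(t):=\tfrac12\|\dot\btheta(t)\|^2$ this yields $\dot W=-\tfrac{K}{n}\dot\btheta^{\top}\BJ(\btheta)\dot\btheta$. In the rotating frame $\sum_k\dot\theta_k=\sum_k\omega_k=0$, so $\dot\btheta\perp\bone_n$ and hence $\dot\btheta^{\top}\BJ(\btheta)\dot\btheta\ge\lambda_2(\BJ(\btheta))\|\dot\btheta\|^2\ge 2c_0 W$; integrating $\dot W\le-\tfrac{2Kc_0}{n}W$ gives $\|\dot\btheta(t)\|\le\|\dot\btheta(0)\|\,e^{-(Kc_0/n)t}$, which is exactly exponential frequency synchronization (indeed $|\dot\theta_i(t)-\dot\theta_j(t)|\le\sqrt2\,\|\dot\btheta(0)\|\,e^{-(Kc_0/n)t}\to0$). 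The same bound makes $\int_0^{\infty}\|\dot\btheta(t)\|\,\mathrm{d}t<\infty$, so $\btheta(t)$ is Cauchy and converges to some $\btheta_\infty$; letting $t\to\infty$ in \eqref{model:kura2} and using $\|\dot\btheta(t)\|\to0$ shows $\btheta_\infty$ satisfies \eqref{def:fp}, i.e.\ it is an equilibrium, and it lies in the closed set $\overline{\Delta(\bar\gamma)}\subseteq\Delta(\gamma)$, as claimed.

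\textbf{Main obstacle.} The Lyapunov computation itself is routine; the real content is in securing the two uniform‑in‑$t$ inputs that make it run — namely (i) trapping $\btheta(t)$ in a \emph{closed} range‑ball strictly inside $\Delta(\pi/2)$, which requires the Dini‑derivative bookkeeping behind Lemma~\ref{lem:basin} rather than merely the invariance of $\Delta(\gamma)$, and (ii) turning the $\mu$‑density hypothesis into a quantitative positive spectral gap $\lambda_2(\BL(\BA))\ge(2\mu-1)n+2(1-\mu)$ via the complement‑graph identity.
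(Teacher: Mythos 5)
Your proof is correct and follows essentially the same route as the paper: invariance of the cohesive region via Lemma~\ref{lem:basin}, the Lyapunov function $\|\dot{\btheta}\|^2$ with $\ddot{\btheta}=-\tfrac{K}{n}\BJ(\btheta)\dot{\btheta}$, the spectral bound $\lambda_2(\BJ(\btheta))\geq\cos(\gamma)\lambda_2$ of the unweighted Laplacian, and Gr\"onwall-type integration. Your additions — the quantitative complement-graph bound $\lambda_2\geq(2\mu-1)n+2(1-\mu)$, the trapping in a closed set $\overline{\Delta(\bar\gamma)}$, and the Cauchy argument showing $\btheta(t)$ actually converges to an equilibrium in $\Delta(\gamma)$ — are sound refinements of details the paper states more briefly, not a different method.
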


\begin{proof}
First note that Lemma~\ref{lem:basin} implies that as long as the initialization satisfies $\btheta(0)\in\Delta(\gamma)$, $\btheta(t)$ stays in $\Delta(\gamma)$. Now it is safe to only analyze the dynamics in $\Delta(\gamma).$

We consider the $\ell_2$-norm of $\dot{\btheta}(t)$ with initialization $\btheta(0)$ inside the cohesive region. Take the derivative w.r.t. $t$ and we have
\[
\frac{\diff}{\diff t}\|\dot{\btheta}(t)\|^2 =2 \dot{\btheta}^{\top}\ddot{\btheta}
\]
where
\[
\ddot{\theta}_i(t) = -\frac{K}{n}\sum_{j=1}^n a_{ij}\cos(\theta_i(t) - \theta_j(t))\cdot (\dot{\theta}_i(t) - \dot{\theta}_j(t)).
\]
Writing it in matrix form gives
\[
\ddot{\btheta}(t) = -\frac{K}{n} \BJ(\btheta(t)) \dot{\btheta}(t)
\]
where 
$\BJ(\btheta(t))$ satisfies 
\[
\BJ(\btheta) = (J(\btheta))_{ij}, \quad (J(\btheta))_{ij} =  -a_{ij}\cos(\theta_i - \theta_j) \leq - a_{ij}\cos(\gamma) \leq 0
\]
if $\btheta\in\Delta(\gamma)$ and $\gamma < \pi/2.$ 
Note that the underlying network is connected since the degree of each node is at least $n/2$.
Thus the second smallest eigenvalue of $
\BJ(0)$, the Laplacian of the corresponding adjacency matrix, is strictly positive, which is a classical result in spectral graph theory~\cite{Chung97}.
Therefore, $\BJ(\btheta)$ is positive semidefinite for $\btheta\in\Delta(\gamma)$ and its second smallest eigenvalue satisfies
\[
\lambda_2(\BJ(\btheta)) \geq \cos(\gamma)\cdot \lambda_2(\BJ(0)) > 0.
\]
Note that $\dot{\btheta}(t)^{\top}\bone_n = 0$, and thus it holds that
\begin{align*}
\frac{\diff}{\diff t}\|\dot{\btheta}(t)\|^2  & = 2\dot{\btheta}(t)^{\top}\ddot{\btheta}(t)= -\frac{2K}{n}\dot{\btheta}(t)^{\top} \BJ(\btheta(t)) \dot{\btheta}(t) \\
& \leq -\frac{2K}{n} \lambda_2(\BJ(\btheta(t))) \cdot \|\dot{\btheta}(t) \|^2 \\
&  \leq -\frac{2K}{n} \cos(\gamma)\cdot \lambda_2(\BJ(0)) \cdot \|\dot{\btheta}(t)\|^2.
\end{align*}
As a result, we have
\[
\frac{\diff }{\diff t}\left( e^{2n^{-1}K \cos(\gamma) \lambda_2(\BJ(0)) \cdot t } \|\dot{\btheta}(t)\|^2 \right)\leq 0.
\]
By integrating the time $t$ from 0, it holds that
\[
\|\dot{\btheta}(t)\|^2 \leq e^{-2n^{-1}K \cos(\gamma) \lambda_2(\BJ(0))  \cdot t} \|\dot{\btheta}(0)\|^2.
\]
In other words, $\|\dot{\btheta}(t)\|^2$ converges to 0 exponentially fast as $t\rightarrow\infty$ where $\gamma < \frac{\pi}{2}$, i.e., converging to a frequency synchronized solution of the system. Moreover, this equilibrium is stable since it is within the cohesive region and $\BJ(\btheta)\succeq 0$.
\end{proof}

\begin{proof}[Proof of Theorem~\ref{thm:main}]
The proof of the main theorem follows from combining the aforementioned three results, Lemma~\ref{lem:quad}, Lemma~\ref{lem:basin}, and Lemma~\ref{lem:global} together. 
First of all, Lemma~\ref{lem:basin}  implies that if 
\[
\frac{\|\omega\|_{\infty}}{K} \leq \frac{2\mu-1}{2},
\] 
then the cohesive region $\Delta(\pi/2)$ is a basin of attraction. Then Lemma~\ref{lem:global} shows that as long as an initialization is chosen within $\Delta(\pi/2)$, the oscillators will stay in the cohesive region and enjoy exponential frequency synchronization.

On the other hand, suppose
\[
\frac{\|\omega\|_{\infty}}{K}  \leq \sqrt{\mu - \frac{3}{4} + \frac{1-\mu}{n}}  + \mu - 1.
\]
we have shown that all the stable equilibria, if there is any, must stay within $\Delta(\pi/2).$ 

As long as $1\geq \mu > \frac{3-\sqrt{2}}{2}$, we have
\[
 \sqrt{\mu - \frac{3}{4}}  + \mu - 1 \leq \mu - \frac{1}{2}.
\]
This means the condition in Lemma~\ref{lem:quad} is stronger than those in Lemma~\ref{lem:basin} and~\ref{lem:global}. In other words, if~\eqref{eq:main} holds, the whole system has a unique and stable frequency synchronized solution which is also phase-cohesive.

\end{proof}


\end{document}